\def\8u{\infty}
\newcommand{\dm}{\begin{displaymath}}
\newcommand{\md}{\end{displaymath}}
\newcommand{\ct}{\begin{center}}
\newcommand{\tc}{\end{center}}
\numberwithin{equation}{section} 
\newtheorem{theorem}{Theorem}
\newtheorem{definition}{Definition}
\newtheorem{lemma}{Lemma}
\newtheorem{remark}{Remark}
\begin{document}

\title{{\bf  Uncertainty Principle and Sparse Reconstruction in Pairs of Orthonormal Rational Function Bases}}
\author{
     \quad\ {\bf Dan Xiong$^1$, Li Chai$^2$, Jingxin Zhang$^3$ }\\
{\small 1. School of Information Science and Engineering} \\
{\small 2. Engineering Research Center of Metallurgical Automation and Measurement Technology}\\
{\small Wuhan University of Science and Technology, Hubei, Wuhan, 430081, China}\\
{\small 3. School of Software and Electrical Engineering}\\
{\small Swinburne University of Technology, Melbourne, VIC3122, Australia}}
\date{} 
\maketitle
\pagestyle{empty}  
\thispagestyle{empty} 

{\hspace{0mm}\bf {Abstract}}:\quad Most rational systems can be described in terms of orthonormal basis functions.
This paper considers the reconstruction of a sparse coefficient vector for a rational transfer function
under a pair of orthonormal rational function bases and from a limited number of linear frequency-domain measurements.
We prove the uncertainty principle concerning pairs of compressible representation of orthonormal rational functions
in the infinite dimensional function space.
The uniqueness of compressible representation using such pairs is provided as a direct consequence of uncertainty principle.
The bound of the number of measurements which guarantees the replacement of $1_0$ optimization searching for the unique sparse reconstruction by $1_1$ optimization using random sampling on the unit circle
with high probability is provided as well.

\indent{\hspace{0mm}\bf Key words}:\quad sparse system representation;
pairs of orthonormal rational function bases; uncertainty principle; $l_1$ optimization.


\section{Introduction}

{\hspace{5mm} A number of signal representations have been developed for the variety of the basis,
such as sinusoids, wavelets [1], Wilson bases [2], ridgelets [3] and curvelets [4].
The object of interest is the effective representation which requires very few significant coefficients.
However, different basis shows its own advantage in different aspect when representing the signal effectively.
For example, wavelets perform relatively poorly on high-frequency sinusoids,
for which sinusoids are very effective.
On the other hand, sinusoids perform poorly on impulsive events, for which wavelets are very effective.
Then a natural question arises: can we get a much shorter (sparser) representation using terms from each of several different bases. This question is posed by Donoho and his coworker, which has led to the use of dictionaries made from a concatenation of several orthonormal bases, and to seek representations of a signal $S$ as
\begin{equation} \label{repre}
S=\sum_{i=1}^n \gamma_i^{\phi} \phi_i+\sum_{i=1}^n \gamma_i ^{\psi}\psi_i=[\Phi \quad \Psi] \gamma,
\end{equation}
where
$
\Phi=\{\phi_1, \phi_2, \cdots, \phi_n\}
$
and
$
\Psi=\{\psi_1, \psi_2, \cdots, \psi_n\}
$
are two orthonormal bases for $\mathbb{R}^n$ with $n$ orthogonal vectors of unit length,
and
$\gamma^T= [\gamma_1^{\phi}, \gamma_2^{\phi}, \cdots, \gamma_n^{\phi}, \gamma_1^{\psi}, \gamma_2^{\psi}, \cdots, \gamma_n^{\psi}]
\in \mathbb{R}^{2n}$
is the representation coefficient.

The uniqueness of the sparse representation and the solution of the unique representation are two key problems.
Notice that (\ref{repre}) is an underdetermined set of $n$ equations with $2n$ unknowns,
the unique representation needs additional requirement, that is the sparsity.
Hence we need to minimize the number of the nonzero element in $\gamma$.
This can be expressed as an optimization problem
$$
(P_0): \quad \min_{\gamma} \|\gamma\|_0 \quad \mbox{s.t.} \quad S=[\Phi \quad \Psi] \gamma.
$$

The results have been exploited by Donoho and Huo in [5]:
if $S$ is representable as a highly sparse superposition of atoms from time¨Cfrequency dictionary
($\Phi$ is the spike basis and $\Psi$ is the Fourier basis),
then there is only one such highly sparse representation of $S$.
Specifically, the uniqueness of the solution to the $(P_0)$ problem is ensured for
$\|\gamma\|_0<\frac{1}{2}(1+\frac{1}{M})$,
where
$M=\sup_{1\leq i, j \leq n} |\langle \phi_i, \psi_j \rangle|$,
and the solution can be obtained by solving the convex optimization problem to minimize
the 1-norm of the coefficients among all decompositions:
$$
(P_1): \quad \min_{\gamma} \|\gamma\|_1 \quad \mbox{s.t.} \quad S=[\Phi \quad \Psi] \gamma.
$$

Underlying this result is a general uncertainty principle
which states that if two orthonormal bases are mutually incoherent,
no nonzero signal can have a sparse representation in both bases simultaneously.
In mathematical terminology,
if signal $S$ is expressed in each basis respectively
$$
S=\sum_{i=1}^n \alpha_i \phi_i=\sum_{i=1}^n \beta_i \psi_i,
$$
then we have
$$
\|\alpha\|_0+\|\beta\|_0 \geq 1+M^{-1},
$$
where $\alpha:=[\alpha_1, \alpha_2, \cdots, \alpha_n]^T$ and $\beta:=[\beta_1, \beta_2, \cdots, \beta_n]^T$.

The uncertainty principle can be dated back to the presentation
in the setting of discrete sequences, continuous and discrete-time functions,
and for several measures of ``concentration" (e.g., $L_2$ and $L_1$ measures)[6].
A very general uncertainty principle for operators on Banach spaces is given in [7],
including uncertainty principles for Bessel sequences in Hilbert spaces and for integral operators between measure spaces.
In the setting of pair of orthonormal bases,
uncertainty principle holds for a variety of interesting basis pairs,
not just sinusoids and spikes [5].
Related phenomena hold for functions of a real variable,
with basis pairs such as sinusoids and wavelets,
and for functions of two variables, with basis pairs such as wavelets and ridgelets.
In these settings, if a function is representable by a sufficiently sparse
superposition of terms taken from both bases,
then there is only one such sparse representation,
and it may be obtained by minimum 1-norm atomic decomposition.

In the field of signal and systems,
since the uncertainty principle leads to the uniqueness of a sparse solution
and provides the theoretical foundation for reconstruction methods,
it plays an important role in compressed sensing (CS) [8-11],
which is a new framework for simultaneous sampling and compression of signals
and has drawn much attention since its advent several years ago.
It is well known from the CS literature that a signal may have a much sparser representation in an
overcomplete basis (redundant dictionary) consisting of concatenated orthogonal bases [5, 12-16].

In the context of finite dimensional vector spaces,
[12], [13] and [13] have presented and analyzed the sparse representation
of vector signals under a pair of orthonormal bases.
In comparison to the result of [5],
[12] has presented an improved uncertainty principle with better bounds
yielding uniqueness of the $(P_0)$ solution.
The bound ensuring the uniqueness of the $(P_1)$ solution is achieved for $\|\gamma\|_0<\frac{1}{M}$,
which enlarges the class of signals whose optimal sparse representation can be found by
a simple, linear-programming based search.

As analyzed in [12] and [14], in the finite dimensional vector space,
the uncertainty principle concerning pairs of representations of $\mathbb{R}^N$ vectors in different orthonormal bases
has a direct impact on the uniqueness property of the sparse representation of such vectors using pairs of orthonormal bases
as overcomplete dictionaries.
Hence the uncertainty principle and uniqueness are fundamentally instrumental to the sparse representation of signals
under the pairs of orthonormal bases.

Rational functions are widely used in signal processing and control
to model both signals and dynamic systems.
This paper investigates the sparse representation of a rational transfer function
under the pairs of orthonormal rational function bases
and the reconstruction of the sparse coefficient for the rational transfer function
under such pairs.
Reconstruction using the generalized orthogonal basis function (GOBF) was
introduced into system identification since the work of [17].
Rational trasnfer functions shows the advantages in improving efficiency
of the representation of linear systems,
and orthonormality leads to a great simplification of
the analysis and synthesis involved in using the basis functions.
Identification and control of linear stable dynamic systems using orthogonal rational functions (ORFs)
have been widely studied over the last twenty years, see for instance
[17-31].
In the context of signal processing,
the class of rational orthonormal basis functions has turned out to be particularly useful.
These bases and the transformations that are related to them,
such as the rational orthonormal filter structures
introduced in the 1950's by Kautz, Huggins and Young [32], [33],
have proved to be instrumental to the analysis of several problems
arising from the signal and systems theory.
Orthonormal bases are useful tools in many other branches of science as well.
For example, in the mathematical literature,
the rational orthonormal functions are
utilized in rational approximations and interpolations,
which are proved to be interconnected with the least-squares problem
and have been assembled and further developed by Walsh [34].

Recently, combined with the theory of compressed sensing,
sparse system identification using ORFs has been investigated [35-39].
In the context of finite dimensional function spaces, [40] has discussed the random sampling
in the bounded orthonormal systems with one orthonormal basis from the perspective of
structured random matrix.
However, the essence of these methods
is to find a sparse representation of the system under a
single ORF basis, which may not yield the sparsest solution.

Different from the aforementioned work, we first extend the results of [12] and [14] to infinite dimensional function space
to establish the uncertainty principle and uniqueness of compressible representation for rational transfer functions,
using the uniform bound of maximal absolute inner product of the pair of the ORFs as an index.
We then derive a compressed sensing formulation for finding a sparse representation of a rational transfer function
in pairs of ORF bases using the frequency domain measurement sampled on the unit circle.
We give a lower bound of the number of measurements which guarantees the replacement of the $l_0$ optimization
by the $l_1$ optimization under a pair of ORF bases.

The contributions of this paper are:

$\bullet$ Analysis on the uncertainty principle and uniqueness for compressible representation of rational transfer functions
in pairs of ORF bases, which extends the results of [12] to infinite dimensional function space.

$\bullet$ A novel reconstruction method for rational transfer functions with finite-order combination of two ORF bases.

$\bullet$ Analysis on the lower bound of the number of measurements which guarantees exact reconstruction by the $l_1$ optimization
under a pair of ORF bases.

The rest of this paper is organized as follows.
In Section II, the uncertainty principle and uniqueness for compressible representation of rational transfer functions are given.
The sparse system reconstruction using two orthonormal bases is given in Section III.
Section IV presents the lower bound of the number of measurements which guarantees
the replacement of the $l_0$ optimization by the $l_1$ optimization,
the proof of which is given in Section V.
Section VI concludes the paper and discussed the future work.
All the omitted proofs are presented in Appendix A - C.

\section{Uncertainty Principle and Uniqueness for compressible representation of transfer functions}
{\hspace{5mm} Transfer functions of LTI systems can be put in a Hilbert space framework.
The Hardy space $H_2$ is a Hilbert space with the inner product between two rational functions
$F(z)$ and $G(z)$ defined as
\begin{equation} \label{innerproduct}
\langle F(z), G(z) \rangle=\frac{1}{2\pi i} \oint_\mathbb{T} F(z)\overline{G(z)}\frac{dz}{z}
=\frac{1}{2\pi} \int_0^{2\pi} F(e^{i\omega})\overline{G(e^{i\omega})}d \omega,
\end{equation}
where $\mathbb{T}=\{z|\,|z|=1\}.$
And the corresponding norm in the space $H_2$ is defined as
\begin{equation} \label{norm}
\|F(z)\|_{H_2}=\sqrt{\langle F(z), F(z) \rangle},
\end{equation}
which is simply denoted by $\|F(z)\|$ hereafter.
In this paper, the space of proper, stable, real-rational transfer functions is of interest
and is referred to as $RH_{2}$, a subspace of $H_{2}$.

Given a transfer function $H(z) \in RH_2$,
it has a unique representation in every ORF basis of this space.
If $\{\phi_k(z)\}_{k=1}^\infty$ is an ORF basis,
then we have
$$
H(z)=\sum_{k=1}^\infty \alpha_k \phi_k(z),
$$
where $\alpha_k=\langle H(z), \phi_k(z)\rangle$.

Suppose we have two different ORF bases $\{\phi_k(z)\}_{k=1}^\infty$, $\{\psi_l(z)\}_{l=1}^\infty$ in the $RH_2$ space.
Then every transfer function has a unique representation under the two bases respectively,
denoted as
\begin{equation} \label{H(z)}
H(z)=\sum_{k=1}^\infty \alpha_k \phi_k(z)=\sum_{l=1}^\infty \beta_l \psi_l(z).
\end{equation}
\noindent
Obviously, if $H(z)$ has a stable pole,
then the representation of $H(z)$ using impulse response is infinite.
According to classical sampling theory,
a large number of sampling data are required to guarantee the approximation performance.
However, if the pole of the selected rational bases is exactly the same as that of $H(z)$,
then the representation will be much shorter.
A natural question arises now: can we get a much shorter (sparser) representation of
$H(z)$ in a joint, overcomplete set of general rational baese, say
$$
\{\Phi(z),\Psi(z)\}=\{\phi_1(z), \phi_2(z), \cdots, \psi_1(z), \psi_2(z),\cdots\},
$$
and reconstruct the transfer function with fewer observations?

The following two theorems are established to ensure the uniqueness property
of the compressible representation of transfer function in space $H_{2}$ using pairs of ORF bases.
As the compressible representation is of interest,
we assume that the coefficients of the transfer functions discussed in this paper
form a sequence in $l_1$, that is
\begin{equation} \label{l1bounded}
0\leq \|\alpha\|_1=\sum_{k=1}^{\infty} |\alpha_k|< \infty.
\end{equation}

The sparsity defined in [12] cannot be used for rational transfer functions,
which have infinite impulse coefficients.
We first present a new definition of sparsity, called $\varepsilon$-sparsity
and then establish the uncertainty principle
that leads to the bound yielding uniqueness of the sparse representation.

\begin{definition}
For a fixed threshold $\varepsilon >0$ and an infinite sequence $\alpha=(\alpha_1, \alpha_2, \cdots)^T$ in $l_1$,
let
$$
N_{\varepsilon}(\alpha)=\min \{K: \sum_{k=K}^{\infty}|\alpha_k|\leq \varepsilon\},
$$
the $\varepsilon$-support of $\alpha$ is defined as
$$\Gamma_{\varepsilon}(\alpha)=\{k : |\alpha_k|\neq 0, 1\leq k < N_{\varepsilon}(\alpha)\},$$
and the cardinality of $\Gamma_{\varepsilon}(\alpha)$ as the $\varepsilon$-0 norm of $\alpha$,
denoted by $\| \alpha \|_{0(\varepsilon)}$.
\end{definition}

\begin{remark}
Equation (\ref{l1bounded}) guarantees the existence of $N_{\varepsilon}(\alpha)$.
If $\varepsilon=0$, then $\varepsilon$-support $\Gamma_{\varepsilon}(\alpha)$ for $\alpha$ is the support of $\alpha$ in the general sense, i.e. $\{k:|\alpha_k|\neq 0\}$.
\end{remark}

\begin{definition}
For a given positive integer s, the coefficient $\alpha$ is $(\varepsilon, s)$-sparse in the sense of $\varepsilon$-0 norm
if $\| \alpha \|_{0(\varepsilon)} \leq s$.
For brevity, we call the coefficient $\alpha$  $\varepsilon$-sparse if the value of $s$ is not concerned.
\end{definition}

\begin{definition}
A rational transfer function is $(\varepsilon, s)$-sparse
if the representation coefficient under a orthonormal rational function basis is $(\varepsilon, s)$-sparse.
\end{definition}

If $\varepsilon=0$, then $\| \alpha \|_{0(\varepsilon)}=\| \alpha \|_0$,
which is the number of nonzeros in $\alpha$,
and the $(\varepsilon, s)$-sparsity becomes the s-sparsity in compressed sensing.
However it should be noted that for a rational transfer function with poles away from zero,
its impulse response is usually infinite
and cannot be $(0, s)$-sparse corresponding to $\varepsilon=0$,
which shows that the definition of the sparsity in traditional compressed sensing is not
applicable to the sparsity of the rational transfer function.
The uncertainty principle cannot be derived either.
In this following, based on the $(\varepsilon, s)$-sparsity,
we present the uncertainty principle and uniqueness for the representation of the rational transfer
functions under two ORF bases.

\begin{theorem}\label{UP}
(Uncertainty Principle)
Let $H(z)\in RH_{2}$ be a transfer function that can be represented both as
$H(z)=\sum_{k=1}^\infty \alpha_k \phi_k(z)$ and
$H(z)=\sum_{l=1}^\infty \beta_l \psi_l(z)$.
For a fixed threshold $\varepsilon >0$,
$\Gamma_{\varepsilon}(\alpha)$ and $\Gamma_{\varepsilon}(\beta)$ are the $\varepsilon$-supports of $\alpha$ and $\beta$, respectively, whose cardinalities are
$\| \alpha \|_{0(\varepsilon)}$ and $\| \beta \|_{0(\varepsilon)}$,
then for all such pairs of representation we have
$$
{(\sqrt{\| \alpha \|_{0(\varepsilon)}}+\varepsilon)}^2
+{(\sqrt{\| \beta \|_{0(\varepsilon)} }+ \varepsilon)}^2 \geq \frac{2}{\mu},
$$
where
\begin{equation} \label{Eq2.4}
\mu= \sup_{k,l} |\langle \phi_k(z), \psi_l(z) \rangle|
\end{equation}
and $\langle\phi_k(z), \psi_l(z)\rangle$ is the inner product of $\phi_k(z)$ and $\psi_l(z)$ defined in (\ref{innerproduct}).
Following the terminology of compressed sensing,
we call $\mu$ the mutual coherence of two ORF bases $\{\phi_k(z)\}$ and $\{\psi_l(z)\}$.
\end{theorem}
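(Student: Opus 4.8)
The plan is to reduce the stated inequality to a single clean product estimate, an $\ell_1$ form of the uncertainty principle, namely $\|\alpha\|_1\,\|\beta\|_1\geq 1/\mu$, and then to control each $\ell_1$ norm by its $\varepsilon$-support through a head/tail decomposition. Throughout I take the natural normalization $\|H\|=1$, so that Parseval in the two orthonormal bases gives $\|\alpha\|_2=\|\beta\|_2=\|H\|=1$; without such a normalization the $\varepsilon$-support is not scale invariant and the threshold $\varepsilon$ carries no information. Writing $s_\alpha=\|\alpha\|_{0(\varepsilon)}$ and $s_\beta=\|\beta\|_{0(\varepsilon)}$, the endgame is to combine the product bound with $\|\alpha\|_1\leq\sqrt{s_\alpha}+\varepsilon$ and $\|\beta\|_1\leq\sqrt{s_\beta}+\varepsilon$ to get $(\sqrt{s_\alpha}+\varepsilon)(\sqrt{s_\beta}+\varepsilon)\geq 1/\mu$, and then to finish with the arithmetic--geometric mean inequality $a^2+b^2\geq 2ab$ applied to $a=\sqrt{s_\alpha}+\varepsilon$ and $b=\sqrt{s_\beta}+\varepsilon$.

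First I would establish the product bound. Expanding $\|H\|^2=\langle H,H\rangle$ with $H=\sum_k\alpha_k\phi_k(z)$ inserted in the first slot and $H=\sum_l\beta_l\psi_l(z)$ in the second gives
\[
1=\|H\|^2=\sum_{k}\sum_{l}\alpha_k\,\overline{\beta_l}\,\langle\phi_k(z),\psi_l(z)\rangle .
\]
Taking absolute values and using $|\langle\phi_k(z),\psi_l(z)\rangle|\leq\mu$ from (\ref{Eq2.4}) yields $1\leq\mu\sum_{k,l}|\alpha_k|\,|\beta_l|=\mu\,\|\alpha\|_1\,\|\beta\|_1$, which is exactly the product bound. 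The delicate point is that this is a double series over an infinite index set, so the termwise expansion of the inner product and the rearrangement into $\sum_{k,l}$ must be justified. This is precisely where hypothesis (\ref{l1bounded}) enters: since $\sum_{k,l}|\alpha_k|\,|\beta_l|\,|\langle\phi_k(z),\psi_l(z)\rangle|\leq\mu\,\|\alpha\|_1\,\|\beta\|_1<\infty$, the family is absolutely summable, so Fubini--Tonelli legitimizes both the interchange of the two sums and the term-by-term evaluation of the inner products.

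It then remains to bound each $\ell_1$ norm by its $\varepsilon$-support. I split $\alpha$ into its head $\{k:1\leq k<N_\varepsilon(\alpha)\}$ and its tail $\{k:k\geq N_\varepsilon(\alpha)\}$. By the very definition of $N_\varepsilon(\alpha)$ the tail contributes $\sum_{k\geq N_\varepsilon(\alpha)}|\alpha_k|\leq\varepsilon$, while the head contains exactly $s_\alpha$ nonzero entries (the elements of $\Gamma_\varepsilon(\alpha)$), so Cauchy--Schwarz over those indices gives $\sum_{1\leq k<N_\varepsilon(\alpha)}|\alpha_k|\leq\sqrt{s_\alpha}\,\|\alpha\|_2=\sqrt{s_\alpha}$. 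Adding the two contributions yields $\|\alpha\|_1\leq\sqrt{s_\alpha}+\varepsilon$, and symmetrically $\|\beta\|_1\leq\sqrt{s_\beta}+\varepsilon$. Substituting into the product bound and invoking AM--GM delivers the claimed inequality.

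The main obstacle I anticipate is not any individual step, each of which is elementary, but the passage to infinite dimensions: guaranteeing that all the series converge and may be manipulated freely, which is exactly the function of the standing assumption (\ref{l1bounded}), together with fixing the normalization so that the scalar threshold $\varepsilon$ and the dimensionless counts $s_\alpha,s_\beta$ are placed on a common footing. Once convergence and summation order are secured, the argument is a faithful adaptation of the finite-dimensional incoherence proof of [12], with the tail term $\varepsilon$ acting as the leakage that the $\varepsilon$-support count deliberately discards.
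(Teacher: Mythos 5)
Your proposal is correct and follows essentially the same route as the paper's own proof: normalize $\|H\|=1$, use incoherence plus a head/tail split of each coefficient sequence to get $1\leq\mu(\sqrt{\|\alpha\|_{0(\varepsilon)}}+\varepsilon)(\sqrt{\|\beta\|_{0(\varepsilon)}}+\varepsilon)$, then finish with the AM--GM inequality. The only differences are cosmetic: you bound the head sums by Cauchy--Schwarz directly, where the paper derives the same bound $\sqrt{\|\alpha\|_{0(\varepsilon)}}$ via a Lagrange-multiplier optimization, and your explicit Fubini--Tonelli justification of the double-series rearrangement makes rigorous a step the paper performs silently.
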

\noindent
{\bf Proof:} See Appendix A.

Uncertainty Principle reveals that
in the perspective of $\varepsilon$-0 norm,
a transfer function $H(z)$ having a sparse representation in the joint set of two ORF bases
will have highly nonsparse representation in either of these bases alone.
And the uncertainty principle directly determines the bound
which guarantees the uniqueness of the sparse representation in a pair of ORF bases.
We give a new sparsity definition which is used to build the uniqueness theorem.

\begin{definition}
A rational transfer function $H(z)$ is $(\varepsilon, s)$-sparse in the pairs of orthonormal rational function bases
if the representation of $H(z)$ under a orthonormal rational function basis,
such as
\begin{equation} \label{biorthorepre}
H(z)=\sum_{k=1}^\infty \theta_k^{\phi} \phi_k(z)+\sum_{l=1}^\infty \theta_l^{\psi} \psi_l(z)
\end{equation}
with the coefficient satisfying $(\varepsilon, s)$-sparsity.
For brevity, denote $\theta_1=(\theta_1^{\phi}, \theta_2^{\phi}, \cdots)$
and $\theta_2=(\theta_1^{\psi}, \theta_2^{\psi}, \cdots)$,
so the $(\varepsilon, s)$-sparsity for the rational transfer function $H(z)$ is equivalence to
$\|\theta_1\|_{0(\varepsilon)}+ \|\theta_2\|_{0(\varepsilon)} \leq s$.
\end{definition}

\begin{theorem}\label{Unique}
(Uniqueness)
For a fixed threshold $\varepsilon >0$,
assume $H(z)$ is $(\varepsilon, s)$-sparse under a pair of ORF bases
$\{\phi_k(z)\}_{k=1}^\infty$ and $\{\psi_l(z)\}_{l=1}^\infty$.
Then the representation (\ref{biorthorepre}) is unique if
$$
{(\sqrt{\|\theta_1\|_{0(\varepsilon)} }
+\varepsilon)}^2+{(\sqrt{\|\theta_2\|_{0(\varepsilon)} }+ \varepsilon)}^2
< \frac{1}{\mu},
$$
where $\theta_1$ and $\theta_2$ as defined in Definition 4.
\end{theorem}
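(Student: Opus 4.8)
The plan is to argue by contradiction and to reduce uniqueness to the Uncertainty Principle (Theorem~\ref{UP}) applied to the difference of two competing representations. Suppose $H(z)$ admits two representations of the form (\ref{biorthorepre}) in the joint set $\{\Phi(z),\Psi(z)\}$, with coefficient pairs $(\theta_1,\theta_2)$ and $(\hat\theta_1,\hat\theta_2)$, both satisfying the hypothesized bound, and suppose they differ. Subtracting the two expansions gives $\sum_k(\theta_k^{\phi}-\hat\theta_k^{\phi})\phi_k(z)=-\sum_l(\theta_l^{\psi}-\hat\theta_l^{\psi})\psi_l(z)$; call this common function $G(z)$. First I would check that $G(z)\neq 0$: if $G=0$ then, since $\{\phi_k\}$ and $\{\psi_l\}$ are each bases, the differences $a:=\theta_1-\hat\theta_1$ and $b:=\theta_2-\hat\theta_2$ would both vanish, forcing the two representations to coincide, a contradiction. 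Hence $G(z)$ is a nonzero element of $RH_2$ with $\phi$-coefficients $a$ and $\psi$-coefficients $-b$.

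Next I would apply Theorem~\ref{UP} to $G(z)$. Since $\|-b\|_{0(\varepsilon)}=\|b\|_{0(\varepsilon)}$, the Uncertainty Principle yields $(\sqrt{\|a\|_{0(\varepsilon)}}+\varepsilon)^2+(\sqrt{\|b\|_{0(\varepsilon)}}+\varepsilon)^2\geq \frac{2}{\mu}$. The goal is then to contradict this lower bound using the two uniqueness hypotheses, and the bridge is a subadditivity estimate for the $\varepsilon$-$0$ norm under differences, namely $\|a\|_{0(\varepsilon)}\leq\|\theta_1\|_{0(\varepsilon)}+\|\hat\theta_1\|_{0(\varepsilon)}$ and $\|b\|_{0(\varepsilon)}\leq\|\theta_2\|_{0(\varepsilon)}+\|\hat\theta_2\|_{0(\varepsilon)}$.

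Granting this estimate, I would combine three elementary facts: the map $t\mapsto(\sqrt{t}+\varepsilon)^2$ is increasing; $\sqrt{s+\hat s}\leq\sqrt{s}+\sqrt{\hat s}$ for nonnegative $s,\hat s$; and, as the decisive bookkeeping, expanding both sides shows $(\sqrt{s+\hat s}+\varepsilon)^2\leq(\sqrt{s}+\varepsilon)^2+(\sqrt{\hat s}+\varepsilon)^2-\varepsilon^2$, the saving being exactly that the constant $\varepsilon^2$ produced by squaring is counted once on the left but twice on the right. Applying this to the $\phi$- and $\psi$-parts, summing, and denoting by $F(\cdot)$ the functional appearing in the theorem statement, I obtain $(\sqrt{\|a\|_{0(\varepsilon)}}+\varepsilon)^2+(\sqrt{\|b\|_{0(\varepsilon)}}+\varepsilon)^2\leq F(\theta)+F(\hat\theta)-2\varepsilon^2$. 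By hypothesis $F(\theta)<\frac1\mu$ and $F(\hat\theta)<\frac1\mu$, so the right-hand side is strictly below $\frac2\mu-2\varepsilon^2<\frac2\mu$, contradicting the Uncertainty Principle. Therefore the two representations coincide, which is the claimed uniqueness. This is precisely where the factor of two separating the $\frac2\mu$ of Theorem~\ref{UP} from the $\frac1\mu$ here is consumed, and where the strictness of the hypothesis is essential.

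The step I expect to be the main obstacle is the subadditivity estimate for $\|\cdot\|_{0(\varepsilon)}$. Unlike the ordinary support size, the $\varepsilon$-support is an initial segment (indices below $N_\varepsilon$) intersected with the ordinary support, and the cutoff $N_\varepsilon$ is not additive: the tail of $a=\theta_1-\hat\theta_1$ is controlled only by the sum of the two tails, so $N_\varepsilon(a)$ may exceed both $N_\varepsilon(\theta_1)$ and $N_\varepsilon(\hat\theta_1)$, potentially admitting into $\Gamma_\varepsilon(a)$ indices that lie beyond the individual cutoffs and are therefore not charged to $\Gamma_\varepsilon(\theta_1)\cup\Gamma_\varepsilon(\hat\theta_1)$. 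Making the estimate rigorous will require either a threshold-splitting argument, bounding $N_\varepsilon(a)$ by $\max\{N_{\varepsilon/2}(\theta_1),N_{\varepsilon/2}(\hat\theta_1)\}$ and separately controlling these extra indices, or an additional decay hypothesis on the coefficient sequences; this is the technical heart of the proof and the point at which the definition of $\varepsilon$-sparsity must be invoked most carefully.
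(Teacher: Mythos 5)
Your proposal follows the paper's own proof essentially step for step: assume two distinct representations, both satisfying the stated bound; subtract them to obtain a single function represented in each basis alone; apply Theorem~\ref{UP} to that difference to get the lower bound $\frac{2}{\mu}$; and contradict it via subadditivity of $\|\cdot\|_{0(\varepsilon)}$ combined with the squaring estimate. Your ``decisive bookkeeping'' inequality $(\sqrt{s+\hat s}+\varepsilon)^2\leq(\sqrt{s}+\varepsilon)^2+(\sqrt{\hat s}+\varepsilon)^2-\varepsilon^2$ is exactly the paper's second displayed inequality in its Appendix~A chain (the paper simply drops the $-\varepsilon^2$ term), and your preliminary check that the difference function is nonzero is a point the paper leaves implicit but which is needed before Theorem~\ref{UP} (whose proof normalizes by $\langle H,H\rangle$) can be invoked.

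The step you flag as the main obstacle --- the subadditivity $\|\theta_1-\xi_1\|_{0(\varepsilon)}\leq\|\theta_1\|_{0(\varepsilon)}+\|\xi_1\|_{0(\varepsilon)}$ --- is precisely the step the paper does not justify either: its first inequality in that chain is asserted ``based on the sparsity assumption of Theorem~\ref{UP},'' which is not an argument. Your diagnosis of why it can fail is correct, and the failure is genuine, not merely a technicality. Since $N_\varepsilon$ is defined by a tail-sum condition and tails add, one can take $\theta_1$ and $\xi_1$ each with a single large entry followed by slowly decaying tails of opposite sign, each tail summing to just under $\varepsilon$; then $\|\theta_1\|_{0(\varepsilon)}=\|\xi_1\|_{0(\varepsilon)}=1$, while the tail of $\theta_1-\xi_1$ sums to nearly $2\varepsilon$, pushing $N_\varepsilon(\theta_1-\xi_1)$ arbitrarily deep and making $\|\theta_1-\xi_1\|_{0(\varepsilon)}$ arbitrarily large. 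So the $\varepsilon$-$0$ norm of a difference is not controlled by the sum of the individual $\varepsilon$-$0$ norms, and the published proof has a hole at exactly the point you identify; closing it requires the kind of threshold-splitting you sketch (e.g., working with $\varepsilon/2$-supports in the hypothesis, or adding a decay assumption), or a restatement of the theorem. In short, your proposal is as complete as the paper's own proof, and your ``main obstacle'' is a legitimate criticism of the paper's argument rather than a defect specific to your attempt.
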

\noindent
{\bf Proof:} See Appendix A.

\begin{remark}
If $\varepsilon =0$ and the sequence is finite,
then the results of Theorem \ref{UP} and \ref{Unique} are parallel to the results in [12].
\end{remark}

The sparsity bound determined by $\mu$ is crucial.
Since the transfer function can be represented using the impulse responses,
a simple calculation formula of $\mu$ as shown below can be obtained.

\begin{lemma} \label{Lemma1}
For two arbitrary ORF bases $\{\phi_k(z)\}_{k=1}^\infty$ and $\{\psi_l(z)\}_{l=1}^\infty$,
$$\mu=\sup_{k, l} |\sum_{d=0}^{\infty} b_{dk} a_{dl}|,$$
where $\{b_{dk}\}$ and $\{a_{dl}\}$ are the impulse responses of the two bases, respectively.
\end{lemma}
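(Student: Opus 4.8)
The plan is to exploit the fact that the impulse responses $\{b_{dk}\}$ and $\{a_{dl}\}$ are precisely the coordinates of the basis functions with respect to the standard monomial basis of $H_2$, and then invoke Parseval's identity. Recall that every $F(z)\in RH_2$ admits an expansion $F(z)=\sum_{d=0}^\infty f_d z^{-d}$ whose coefficients $f_d$ are exactly the impulse response of $F$; since $F$ is proper and stable its impulse response decays geometrically, so $(f_d)\in l_1\subset l_2$ and the series converges in the $H_2$ norm. Applying this to the two bases, I would write $\phi_k(z)=\sum_{d=0}^\infty b_{dk}z^{-d}$ and $\psi_l(z)=\sum_{d=0}^\infty a_{dl}z^{-d}$.

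First I would verify that the monomials $\{z^{-d}\}_{d\geq 0}$ are orthonormal in $H_2$. Substituting $z=e^{i\omega}$ into the inner product (\ref{innerproduct}) gives
$$
\langle z^{-d}, z^{-e}\rangle=\frac{1}{2\pi}\int_0^{2\pi} e^{-i(d-e)\omega}\,d\omega=\delta_{de},
$$
so $\{z^{-d}\}$ is an orthonormal system (in fact an orthonormal basis of $H_2$).

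Next I would compute $\langle\phi_k(z),\psi_l(z)\rangle$ by substituting the two expansions and using bilinearity together with the continuity of the inner product to exchange the summations with the inner product. Because the impulse responses of a real-rational stable transfer function are real, the conjugation in (\ref{innerproduct}) is immaterial, and the orthonormality established above collapses the resulting double sum onto the diagonal:
$$
\langle\phi_k(z),\psi_l(z)\rangle=\sum_{d=0}^\infty\sum_{e=0}^\infty b_{dk}\,a_{el}\,\langle z^{-d},z^{-e}\rangle=\sum_{d=0}^\infty b_{dk}\,a_{dl}.
$$
Taking absolute values and the supremum over $k,l$ on both sides, and recalling the definition (\ref{Eq2.4}) of $\mu$, yields the claimed identity $\mu=\sup_{k,l}\bigl|\sum_{d=0}^\infty b_{dk}a_{dl}\bigr|$.

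The one point deserving care, and the main obstacle, is the justification of the termwise evaluation of the inner product, i.e. the interchange of the infinite summations with the integral. This is exactly where the hypothesis $F\in RH_2$ (stability and properness) enters: the geometric decay of the impulse responses places each expansion in $l_2$, so the partial sums converge in $H_2$ and the continuity of $\langle\cdot,\cdot\rangle$ legitimizes the exchange. Equivalently, one may simply cite the standard isometric isomorphism between $H_2$ and $l_2$, under which $\langle\cdot,\cdot\rangle$ becomes the ordinary $l_2$ inner product of the coefficient (impulse-response) sequences. Everything else is a direct computation.
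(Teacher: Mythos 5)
Your proof is correct and follows essentially the same route as the paper's: expand $\phi_k(z)$ and $\psi_l(z)$ in the monomial basis $\{z^{-d}\}$, use its orthonormality to collapse the double sum $\sum_{d'}\sum_{d} b_{d'k}a_{dl}\langle z^{-d'},z^{-d}\rangle$ onto the diagonal $\sum_{d} b_{dk}a_{dl}$, and take the supremum against the definition (\ref{Eq2.4}) of $\mu$. The only difference is that you explicitly verify the orthonormality of the monomials and justify the interchange of summation with the inner product (via $l_2$ convergence and continuity), steps the paper carries out without comment.
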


\begin{proof}
The impulse responses of bases $\phi_k(z)$ and $\psi_l(z)$
can be written as
$$\phi_{k}(z):=\sum_{d'=0}^{\infty}b_{d'k}z^{-d'}, \quad k=1, 2, \cdots, $$
and
$$\psi_l(z):=\sum_{d=0}^{\infty}a_{dl}z^{-d}, \quad l=1, 2, \cdots, $$
respectively.

Then the inner product of $\phi_{k}(z)$ and $\psi_{l}(z)$ becomes
$$
\langle \phi_{k}(z), \psi_{l}(z)\rangle
= \langle\sum_{d'=0}^{\infty}b_{d'k}z^{-d'}, \sum_{d=0}^{\infty}a_{dl}z^{-d}\rangle
= \sum_{d'=0}^{\infty} \sum_{d=0}^{\infty}b_{d'k} a_{dl} \langle z^{-d'},z^{-d}\rangle
= \sum_{d=0}^{\infty} b_{dk} a_{dl},
$$
which equals the inner product of the impulse responses of the two basis functions.
With $\mu$ defined in (\ref{Eq2.4}), the claim follows.
\end{proof}

Lemma \ref{Lemma1} shows that mutual coherence $\mu$ is the uniform bound of the maximal absolute inner product of the impulse responses of the two basis functions.

\section{Sparse system reconstruction using pairs of orthonormal rational function bases}
{\hspace{5mm}
Given a pair of ORF bases $\{\phi_{k}(z)\}$ and $\{\psi_{l}(z)\}$,
assume that $H(z)$ has a $\varepsilon$-sparse representation under such pairs as in (\ref{biorthorepre}).
We want to reconstruct $H(z)$ with a small fraction of the measurements of $H(z)$ on the unit circle.
Precisely, define
$$
T_N:=\{z_{r}=e^{2\pi i (r-1)/N},\quad r=1,2,\cdots,N\}.
$$
We focus on the underdetermined case
with only a few of the components of $\{H(z_r),  r=1, 2, \cdots, N\}$ sampled or observed.
That is, only a small fraction of $T_N$ is known.
Given a subset $\Omega \subset \{1,2,\cdots,N\}$ of size $|\Omega|=m$.
The goal is to reconstruct the representation coefficient and hence the transfer function $H(z)$
from the much shorter $m$-dimensional measurements $\{H(z_r), r \in \Omega\}.$

Now we will restate this problem in a matrix form.
Combining with Definition 1,
$H(z)$ can be rewritten as
$$
H(z)
=\sum\limits_{k=1}^{n_1} \theta_k^{\phi} \phi_k(z) + \sum\limits_{l=1}^{n_2} \theta_l^{\psi} \psi_{l}(z) + \Delta_1 +\Delta_2,
$$
where $n_1=N_{\varepsilon}(\theta_1)-1$ and $n_2=N_{\varepsilon}(\theta_2)-1$,
$\Delta_1=\sum\limits_{k=n_1+1}^{\infty} \theta_k^{\phi} \phi_k(z)$
and
$\Delta_2=\sum\limits_{k=n_2+1}^{\infty} \theta_l^{\psi} \psi_l(z)$.

By simple calculation, we have
$$
\|\Delta_1\|^2
= \langle \sum\limits_{k=n_1+1}^{\infty}  \theta_k^{\phi} \phi_k(z), \sum\limits_{k=n_1+1}^{\infty} \theta_k^{\phi} \phi_k(z)\rangle
= \sum\limits_{k=n_1+1}^{\infty} \! |\theta_k^{\phi}|^2
\leq \varepsilon^2.
$$

Similarly, we have
$
\|\Delta_2\|^2 \leq \varepsilon^2.
$

Denote $\Delta=\Delta_1+\Delta_2$, then we have
$$
\|\Delta\| \leq \sqrt{2(\|\Delta_1\|^2 + \|\Delta_2\|^2)} \leq 2 \varepsilon.
$$

Now the transfer function $H(z)$ can be simplified as
\begin{equation} \label{simplever}
H(z)
=\sum\limits_{k=1}^{n_1} \theta_k^{\phi} \phi_k(z) + \sum\limits_{l=1}^{n_2} \theta_l^{\psi} \psi_{l}(z) + \Delta,
\end{equation}
with $\|\Delta\| \leq 2 \varepsilon$.
With a little bit abuse of the notation,
the unknown coefficients to be determined here are denoted as
$\theta_1=[\theta_1^{\phi}, \theta_2^{\phi}, \cdots, \theta_{n_1}^{\phi}]^T$ and
$\theta_2=[\theta_1^{\psi}, \theta_2^{\psi}, \cdots, \theta_{n_2}^{\psi}]^T$.

Since the arbitrariness of $\varepsilon$, the norm of the term $\Delta$ can be small enough.
And the term $\Delta=0$ when $\varepsilon$ is exactly zero.
In the sequel, we discuss the equation (\ref{simplever}) with the term $\Delta$ omitted.

Define
\begin{equation} \label{total sample matrix}
[\Phi \quad \Psi]:= [(\phi_k(z_r))_{k=1}^{n_1} \ (\psi_l(z_r))_{l=1}^{n_2}], \ r=1,2,\cdots,N,
\end{equation}
and
\begin{equation} \label{total measurements}
H:=[H(z_1), H(z_2), \cdots, H(z_N)]^T.
\end{equation}
\noindent
Then
\begin{equation} \label{matrix form}
H=
[\Phi \quad \Psi]
\begin{bmatrix}
\theta_1\\
\theta_2
\end{bmatrix}.
\end{equation}
\noindent

We randomly select the subset $\Omega$ of size $m (<< n_1+n_2)$ drawn from
the uniform distribution over the index set $\{1,2,\cdots,N\}$,
and denote the measurement by
\begin{equation} \label{Eq4.2}
H_{\Omega}=[\Phi \quad \Psi]_{\Omega}
\begin{bmatrix}
\theta_1\\
\theta_2
\end{bmatrix},
\end{equation}
where $H_{\Omega}$ is the $m \times 1$ vector consisting of $\{H(z_r),  r \in \Omega\}$,
and $[\Phi \quad \Psi]_{\Omega}$ is the $m \times (n_1+n_2)$ matrix $[(\phi_k(z_r))_{k=1}^{n_1} \quad (\psi_l(z_r))_{l=1}^{n_2}], r \in \Omega$.

As $[\Phi \quad \Psi]_{\Omega}$ is the concatenation of two bases, the representation is not unique.
While the uniqueness of the representation is guaranteed if the representation is sufficiently sparse.
The goal is to find the sparsest decomposition from the $l_0$ minimization
$$
(P_0): \
\min_{\theta_1, \theta_2} \left\|
\begin{bmatrix}
\theta_1\\
\theta_2
\end{bmatrix}
\right \|_{0}
\ \mbox{subject to} \quad H_{\Omega}=[\Phi \quad \Psi]_{\Omega}
\begin{bmatrix}
\theta_1\\
\theta_2
\end{bmatrix},
$$
which is an infeasible search problem [41].
An alternative approach called Basis Pursuit [8], [11], [42]
\begin{equation} \label{Eq4.3}
(P_1): \
\min_{\theta_1, \theta_2} \left \|
\begin{bmatrix}
\theta_1\\
\theta_2
\end{bmatrix}
\right \|_{1}
\ \mbox{subject to} \quad H_{\Omega}=[\Phi \quad \Psi]_{\Omega}
\begin{bmatrix}
\theta_1\\
\theta_2
\end{bmatrix}.
\end{equation}

For the sparse representation using only one basis,
compressed sensing theory has presented
the equivalence of $l_0$ optimization and $l_1$ minimization when the representation is sufficiently sparse [41], [42], and have provided the sufficient conditions on the number of measurements needed to
recover the sparse coefficient from the randomly sampled measurements
by solving the $l_1$-minimization problem [40], [41].

However, the existing results cannot be applied to the setting of (\ref{Eq4.3}), for there are two bases concerned.
Hence the sufficient conditions on the equivalence of $l_0$ optimization and $l_1$ minimization should be reconsidered.
Here we first show the orthonormality of $[\Phi \quad \Psi]$ When $N$ is sufficiently large.
With the orthonormality satisfied,
we then present a sufficient condition for exact reconstruction by $l_1$ optimization in pairs of orthonormal bases
in the next Section.
\begin{theorem} \label{ThOrho}
When $N$ is sufficiently large, the composite sampling matrix $[\Phi \quad \Psi]$ satisfies:

(1) $\Phi^{*} \Phi \approx N I_{n_1}$,
where $^*$ is the conjugate transpose, $I_{n_1}$ is the identity matrix of dimension $n_1$.

(2) $\Psi^{*} \Psi \approx N I_{n_2}$.

(3) $
\Phi^{*} \Psi
= (\sum_{r=1}^N \psi_k(z_r) \overline{z_r^{-l}})
\approx
N(\sum_{d=0}^{\infty} b_{dk} a_{dl}),
\quad (k=1, \cdots, n_1, l=1, \cdots, n_2).
$
\end{theorem}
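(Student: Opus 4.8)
The plan is to evaluate each entry of the three products exactly by expanding the orthonormal rational functions in their impulse responses and then summing the resulting exponentials over the equispaced nodes $z_r = e^{2\pi i (r-1)/N}$. The key tool is the exact orthogonality of the $N$th roots of unity, $\sum_{r=1}^N z_r^{\,n} = N$ when $N \mid n$ and $0$ otherwise, which is the discrete analogue of the continuous orthogonality underlying the $H_2$ inner product (\ref{innerproduct}). Thus the discrete sums will reproduce the continuous inner products up to an aliasing correction that I will show is negligible for large $N$, so the three claimed approximations are to be read as $\tfrac1N\Phi^*\Phi \to I_{n_1}$, $\tfrac1N\Psi^*\Psi \to I_{n_2}$, and $\tfrac1N\Phi^*\Psi \to (\sum_{d\ge 0} b_{dk}a_{dl})$ as $N\to\infty$.

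Concretely, for part (1) I would write $(\Phi^*\Phi)_{jk} = \sum_{r=1}^N \overline{\phi_j(z_r)}\,\phi_k(z_r)$, substitute $\phi_k(z)=\sum_{d\ge0} b_{dk}z^{-d}$ together with $\overline{\phi_j(z_r)} = \sum_{d'\ge0}\overline{b_{d'j}}\,z_r^{\,d'}$ (using $|z_r|=1$), and interchange the summations — legitimate because stable proper rational functions have geometrically decaying impulse responses, so the double series converges absolutely. Performing the inner sum over $r$ with the root-of-unity identity collapses everything to
$$
(\Phi^*\Phi)_{jk} = N \sum_{\substack{d,d'\ge 0 \\ d'\equiv d\ (\mathrm{mod}\ N)}} \overline{b_{d'j}}\, b_{dk}.
$$
The diagonal part $d'=d$ yields exactly $N\sum_{d\ge0}\overline{b_{dj}}b_{dk} = N\langle\phi_k,\phi_j\rangle = N\delta_{jk}$ by orthonormality, which is the asserted $NI_{n_1}$; part (2) is verbatim with $\psi$ replacing $\phi$. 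For part (3), the same manipulation applied to $(\Phi^*\Psi)_{kl}=\sum_{r=1}^N \overline{\phi_k(z_r)}\,\psi_l(z_r)$ isolates the diagonal term $N\sum_{d\ge0}\overline{b_{dk}}a_{dl}$, which Lemma \ref{Lemma1} identifies with $N\langle\phi_k,\psi_l\rangle = N\sum_{d\ge0} b_{dk}a_{dl}$ (the coefficients being real), matching the stated limit.

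The crux that remains is bounding the off-diagonal aliasing remainder $N\sum_{d\neq d',\,d'\equiv d\,(\mathrm{mod}\ N)}\overline{b_{d'j}}b_{dk}$ and showing it is $o(N)$, so that dividing by $N$ leaves the identity (resp. Lemma-\ref{Lemma1}) matrix in the limit. Every surviving pair satisfies $|d-d'|\ge N$, hence at least one index is $\ge N$; using the geometric bound $|b_{dk}|,|a_{dl}|\le C\rho^{d}$ with $0<\rho<1$ (valid since the bases lie in $RH_2$ with poles strictly inside $\mathbb{T}$), substituting $d'=d+mN$ and summing the geometric series over $d$ and over $m\neq0$ bounds the remainder by $2C^2 N\rho^{N}/[(1-\rho^2)(1-\rho^N)]$. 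Since $N\rho^N\to0$, each aliasing term is $o(N)$ uniformly over the finitely many indices $1\le j,k\le n_1$ and $1\le l\le n_2$, which establishes all three statements. The principal obstacle is therefore precisely this aliasing estimate: one must confirm that the geometric tail dominates the spurious $N\mid(d-d')$ couplings despite the explicit factor of $N$ generated by the root-of-unity sum.
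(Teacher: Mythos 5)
Your proposal is correct, and its skeleton coincides with the paper's proof in Appendix~A: both expand each entry of $\Phi^*\Phi$, $\Psi^*\Psi$, $\Phi^*\Psi$ in the impulse responses $b_{dk}$, $a_{dl}$, interchange the summations, evaluate the inner sum over the sampling nodes $z_r$, and invoke the orthonormality relations $\sum_d b_{dk}b_{dl}=\delta_{kl}$, $\sum_d a_{dk}a_{dl}=\delta_{kl}$ and Lemma~\ref{Lemma1}. Where you genuinely diverge is at the key limiting step. The paper justifies $\sum_{r=1}^N \overline{z_r^{-d'}}\,z_r^{-d}\to N\delta_{dd'}$ as a Riemann-sum approximation of the integral inner product, holds this for each fixed pair $(d,d')$, and then passes the limit through the infinite double sum over $d,d'$ without comment; this silently discards the pairs with $d'\equiv d \pmod N$, $d'\neq d$, for which the root-of-unity sum equals $N$ exactly and never tends to zero. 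You instead use the exact finite-$N$ identity $\sum_{r=1}^N z_r^{\,n}=N$ iff $N\mid n$, isolate these aliasing pairs explicitly, and kill them with the geometric decay $|b_{dk}|,|a_{dl}|\le C\rho^d$ of impulse responses of functions in $RH_2$, obtaining the quantitative remainder $O\!\left(N\rho^N/[(1-\rho^2)(1-\rho^N)]\right)\to 0$, uniformly over the finitely many indices. In short, your route supplies precisely the uniformity/interchange argument that the paper's proof is missing: the paper's version is shorter and more intuitive, while yours is complete and even yields a convergence rate. (One small point: the displayed formula in part (3) of the theorem statement, $\sum_{r}\psi_k(z_r)\overline{z_r^{-l}}$, is evidently a typo for $\sum_r \overline{\phi_k(z_r)}\,\psi_l(z_r)$, which is how both you and the paper's own proof interpret it; your appeal to realness of the coefficients to drop conjugates is also consistent with the paper's restriction to real-rational transfer functions.)
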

\noindent
{\bf Proof:} See Appendix A.

\section{The lower bound of the number of measurements for exact reconstruction by $l_1$ optimization}
{\hspace{5mm}
In this section, we give the lower bound of the number of measurements
which guarantees the replacement of $l_0$ optimization $(P_0)$ by $l_1$ optimization $(P_1)$
under two ORF bases for a fixed (but arbitrary) support.
We first introduce some notations.

Let $[\Phi \quad \Psi]$ be a composite matrix satisfying
\begin{equation}\label{ortho}
[\Phi \quad \Psi]^*[\Phi \quad \Psi]
=N \begin{bmatrix}
I_{n_1} & \Phi^* \Psi/N\\
\Psi^* \Phi/N  & I_{n_2} \\
\end{bmatrix},
\end{equation}
where $ \Phi^* \Psi/N$ is a matrix with the $(i, j)$-th element
being the coherence of $i$-th column vector $\Phi_i$ of $\Phi$ and the $j$-th column vector $\Psi_j$ of $\Psi$, i.e.
$
\frac{\langle \Phi_i, \Psi_j \rangle}{\|\Phi_i\| \|\Psi_j\|}.
$
The mutual coherence of matrices $\Phi$ and $\Psi$ is defined as
$$
\mu(\Phi, \Psi)=\max_{i, j}\frac{\langle \Phi_i, \Psi_j \rangle}{\|\Phi_i\| \|\Psi_j\|}.
$$

Define $\mu_M=\max\{\mu_\Phi, \mu_\Psi\}$,
where $\mu_\Phi=\max|\Phi_{ij}|$ and $\mu_\Psi=\max|\Psi_{ij}|$
are the largest magnitude among the entries in $\Phi$ and $\Psi$, respectively.

Denote $T_1=\{k: |\theta_k^{\phi}| \neq 0\}$ and  $T_2=\{l: |\theta_l^{\psi}|\neq 0\}$
as the support of coefficient $\theta_1$ and $\theta_2$, respectively.
Let $\Phi_{T_1}$ be the $N \times |T_1|$ matrix corresponding to the columns of $\Phi$ indexed by $T_1$,
and let $\Phi_{\Omega T_1}$ be the $m \times |T_1|$ matrix corresponding to the rows of $\Phi_{T_1}$ indexed by $\Omega$.
And $\Psi_{T_2}$ and $\Psi_{\Omega T_2}$ are defined similarly.
Further let
$$
F=\begin{bmatrix}
I & \Phi_{T_1}^*\Psi_{T_2}/N \\
\Psi_{T_2}^*\Phi_{T_1}/N  &  I
\end{bmatrix}.
$$

\begin{theorem}\label{Th6}
Let $[\Phi \quad \Psi]$ be a matrix obeying (\ref{ortho}).
Fix a subset $T=T_1 \bigcup T_2$ of the coefficient domain,
with $T_1$ and $T_2$ being a subset of the coefficient domain of $\Phi$ and $\Psi$, respectively.
Choose a subset $\Omega$ of the measurement domain of size $|\Omega|=m$, and a sign sequence $\tau$ on $T$ uniformly at random. Suppose $m$ satisfies
$$
m \geq C \mu_M^2 {\max}^2 \{|T|, \log{(\frac{n_1+n_2}{\delta})}, C_{F,\mu(\Phi, \Psi),T,\delta}\},
$$
where
\[
C_{F,\mu(\Phi, \Psi),T,\delta}=\frac{4}{\left[(\frac{1}{2}+\|\Phi_{T_1}^*\Psi_{T_2} /N\|){[2 \log{(\frac{2(n_1+n_2)}{\delta})}]}^{-1/2} - \mu(\Phi, \Psi) \sqrt{|T|}\right]^2}.
\]
Then with the probability exceeding $1-6\delta$,
every coefficient vector
$\theta=\begin{bmatrix}
\theta_1\\
\theta_2
\end{bmatrix}$
supported on $T$ with sign matching $\tau$ can be recovered from solving
\begin{equation}\label{l0}
\min \|\hat{\theta}\|_1 \quad \mbox{s.t.} \quad [\Phi_\Omega \quad \Psi_\Omega] \hat{\theta}=H_{\Omega}
\end{equation}
for the coefficient vector
$\hat{\theta}=\begin{bmatrix}
\hat{\theta}_1\\
\hat{\theta}_2
\end{bmatrix}$,
where $H_{\Omega}=[\Phi_\Omega \quad \Psi_\Omega] \theta$.
\end{theorem}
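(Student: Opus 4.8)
The plan is to follow the dual-certificate (exact-recovery) route standard in compressed sensing, adapted to the concatenated pair of bases. Write $A_\Omega=[\Phi_\Omega\ \Psi_\Omega]$ and let $A_{\Omega T}=[\Phi_{\Omega T_1}\ \Psi_{\Omega T_2}]$ be its restriction to the columns indexed by $T=T_1\cup T_2$. A convex-duality argument shows that $\theta$ (supported on $T$, with sign pattern $\tau$) is the unique minimizer of (\ref{l0}) provided two conditions hold: (i) $A_{\Omega T}$ has full column rank, and (ii) there is a vector $\pi$ in the row space of $A_\Omega$ with $\pi_j=\tau_j$ for $j\in T$ and $|\pi_j|<1$ strictly for every $j\notin T$. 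The theorem therefore reduces to producing such a $\pi$ with high probability, and the natural candidate is the minimal-energy certificate $\pi=A_\Omega^*\,A_{\Omega T}(A_{\Omega T}^*A_{\Omega T})^{-1}\tau$, which satisfies $\pi_T=\tau$ as soon as $A_{\Omega T}^*A_{\Omega T}$ is invertible. It then remains to verify invertibility (with good conditioning) and the strict bound $|\pi_j|<1$ off the support.

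For the conditioning step, I would exploit the normalization (\ref{ortho}). Restricted to $T$ it gives $A_T^*A_T=N F$, so the random submatrix has mean $\mathbb{E}\,[A_{\Omega T}^*A_{\Omega T}]=\tfrac{m}{N}A_T^*A_T=m\,F$. Writing $A_{\Omega T}^*A_{\Omega T}$ as a sum of $m$ independent rank-one terms (rows sampled uniformly), a Rudelson-type selection bound (equivalently matrix Bernstein) controls the deviation $\|A_{\Omega T}^*A_{\Omega T}-mF\|$; because each entry of the sampling matrix is bounded by $\mu_M$ in magnitude, this requires $m\gtrsim \mu_M^2\,|T|\log|T|$, which is the origin of the factor $\mu_M^2\max^2\{|T|,\cdots\}$. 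Since $\lambda_{\min}(F)\ge 1-\|\Phi_{T_1}^*\Psi_{T_2}/N\|$, the deviation bound forces the eigenvalues of $A_{\Omega T}^*A_{\Omega T}$ into an interval bounded away from $0$, so that $A_{\Omega T}$ has full column rank and $\|(A_{\Omega T}^*A_{\Omega T})^{-1}\|\lesssim 1/m$ on this event.

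For the off-support step, I would freeze $\Omega$ (on the good-conditioning event) and use the randomness of the signs $\tau$. For $j\notin T$ write $\pi_j=\langle a_j,\,w\rangle$ with $w=A_{\Omega T}(A_{\Omega T}^*A_{\Omega T})^{-1}\tau$; conditional on $\Omega$ this is a Rademacher sum in the entries of $v_j=(A_{\Omega T}^*A_{\Omega T})^{-1}A_{\Omega T}^*a_j$, so Hoeffding's inequality gives $\mathbb{P}(|\pi_j|\ge \tfrac12\mid\Omega)\le 2\exp(-c/\|v_j\|^2)$. The norm $\|v_j\|$ is controlled by the conditioning bound together with the mutual coherence $\mu(\Phi,\Psi)$ and $\mu_M$; a union bound over the at most $n_1+n_2$ indices $j\notin T$ then produces the $\log((n_1+n_2)/\delta)$ requirement and, after tracking the constants involving $\|\Phi_{T_1}^*\Psi_{T_2}/N\|$ and $\mu(\Phi,\Psi)\sqrt{|T|}$, the threshold $C_{F,\mu(\Phi,\Psi),T,\delta}$ in the third slot of the maximum. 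Collecting the failure probabilities of the few concentration events and budgeting each at $O(\delta)$ yields the stated $1-6\delta$.

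The step I expect to be the main obstacle is the conditioning of $A_{\Omega T}^*A_{\Omega T}$: unlike the single-orthobasis case, its mean is $mF$ rather than a multiple of the identity, so the Rudelson/matrix-concentration estimate must be combined with a lower bound on $\lambda_{\min}(F)$ in terms of $\|\Phi_{T_1}^*\Psi_{T_2}/N\|$ to keep the inverse bounded. This cross-basis coherence is exactly what propagates into $C_{F,\mu(\Phi,\Psi),T,\delta}$ and couples the two deviation bounds, so the delicate part is balancing them so that both the conditioning event and the off-support event hold simultaneously under the single sample-size threshold on $m$.
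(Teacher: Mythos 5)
Your proposal follows essentially the same route as the paper's own proof: the duality characterization (the paper's Lemma~\ref{Lemma2}), the least-squares dual certificate $\pi=[\Phi_\Omega \ \Psi_\Omega]^*A_{\Omega T}(A_{\Omega T}^*A_{\Omega T})^{-1}\tau$, conditioning of $A_{\Omega T}^*A_{\Omega T}$ around $mF$ via Rudelson-type concentration (the paper's Theorem~\ref{Th7}), and the off-support bound via Hoeffding's inequality over the random signs plus a union bound (the paper's Theorem~\ref{Th8} and Lemma~\ref{Lemma11}), with the same probability bookkeeping (the paper works in the Bernoulli sampling model and transfers to uniform sampling at a factor of~2, which is where its $1-6\delta$ comes from). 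The approach and the decomposition into sub-results match the paper's proof in all essential respects.
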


Theorem \ref{Th6} shows that for most sparse coefficient $\theta$ supported on a fixed (but arbitrary) set $T$,
the coefficient can be recovered with overwhelming probability if the sign of $\theta$ on $T$ and the observations
$H_{\Omega}=[\Phi_\Omega \quad \Psi_\Omega] \theta$ are drawn at random.

\begin{remark}
(a) If $\mu$ is small enough, then $C_{F,\mu (\Phi, \Psi),T,\delta}$ equals $32 \log{(\frac{2(n_1+n_2)}{\delta})}$.
Hence $m$ in Theorem 1 is simplified to $m \geq C \mu_M^2 {\max}^2 \{|T|, \log{(\frac{n_1+n_2}{\delta})}\}$.

(b) Another important parameter in Theorem \ref{Th6} is $\mu_M$.
Since each column of $\Phi$ and $\Psi$ has an $l_2$-norm equal to $\sqrt{N}$.
$\mu_M$ take a value between 1 and $\sqrt{N}$.
When the columns of $\Phi$ and $\Psi$ are perfectly flat-
all elements has modulus equal to 1,
we will have $\mu_M=1$, and the bound of $m$ is good.
But when a column is maximally concentrated-
all the column entries but one vanish,
then $\mu_M=\sqrt{N}$,
and Theorem \ref{Th6} offers us no guarantees for recovery from a limited number of samples.
\end{remark}

\section{Proof of Theorem \ref{Th6}}
{\hspace{5mm}
To prove Theorem \ref{Th6}, we need the result essentially proposed in [11].
Here we restate it by our notations.

\begin{lemma}\label{Lemma2}
The coefficient $\theta$ defined in Theorem \ref{Th6} is the unique solution to (\ref{l0})
if and only if there exists a dual vector $\pi \in \mathbb{R}^{n_1+n_2}$ satisfying the following properties:

(1) $\pi$ is in the row space of $[\Phi_\Omega \quad \Psi_\Omega]$,

(2) $\pi(t)=\text{sgn} \theta(t)$ for $t \in T=T_1 \bigcup T_2$,

(3) $|\pi(t)|<1$ for $t \in T^c$.
\end{lemma}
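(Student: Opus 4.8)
Write $A := [\Phi_\Omega \quad \Psi_\Omega]$ for the $m \times (n_1+n_2)$ sampling matrix and $b := H_\Omega = A\theta$, so that (\ref{l0}) is the Basis Pursuit program $\min \|\hat\theta\|_1$ subject to $A\hat\theta = b$. The plan is to read the three conditions on $\pi$ as the single statement that $\pi$ is a subgradient of the convex function $\|\cdot\|_1$ at $\theta$ which also lies in the row space of $A$, i.e. $\pi \perp \ker A$, and then to establish the two implications by convex duality. Recall that $g \in \partial\|\theta\|_1$ means exactly $g(t) = \text{sgn}\,\theta(t)$ on $T$ and $|g(t)| \le 1$ on $T^c$; condition (3) sharpens the latter bound to a strict inequality, and this is precisely what will buy uniqueness rather than mere optimality.

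For sufficiency, suppose such a $\pi = A^{*}\lambda$ exists. Let $\hat\theta$ be any feasible vector and set $h := \hat\theta - \theta$, so $h \in \ker A$ and hence $\langle \pi, h\rangle = \langle \lambda, Ah\rangle = 0$. Splitting the $l_1$ norm over $T$ and $T^c$, using $|\theta(t)+h(t)| \ge \text{sgn}\,\theta(t)\,(\theta(t)+h(t))$ on $T$ together with condition (2) and $\langle \pi, h\rangle = 0$, I would obtain
\[
\|\hat\theta\|_1 \ge \|\theta\|_1 + \sum_{t\in T}\pi(t)h(t) + \sum_{t\in T^c}|h(t)| = \|\theta\|_1 + \sum_{t\in T^c}\bigl(|h(t)| - \pi(t)h(t)\bigr) \ge \|\theta\|_1 + \sum_{t\in T^c}(1-|\pi(t)|)|h(t)|.
\]
Every term in the last sum is nonnegative, so $\theta$ is a minimizer. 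Moreover equality forces $(1-|\pi(t)|)|h(t)| = 0$ for all $t\in T^c$; since $|\pi(t)| < 1$ there, $h$ must vanish off $T$. At this point I invoke linear independence of the columns of $A$ indexed by $T$ (injectivity of $A$ on vectors supported on $T$): combined with $Ah = 0$ and $\mathrm{supp}\,h \subseteq T$ it gives $h = 0$, i.e. $\hat\theta = \theta$, proving uniqueness.

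For necessity, suppose $\theta$ is the unique minimizer. First-order optimality for the convex program, equivalently LP duality after writing $\hat\theta = u - v$ with $u,v \ge 0$, yields a multiplier $\lambda$ with $\pi := A^{*}\lambda \in \partial\|\theta\|_1$; this already delivers (1), (2), and the soft bound $|\pi(t)| \le 1$ on $T^c$. Two points then remain, and they are the heart of the matter. First, uniqueness itself forces the support columns to be independent: if some $0 \neq h \in \ker A$ were supported on $T$, then for small $\varepsilon$ the perturbations $\theta \pm \varepsilon h$ would be feasible with identical $l_1$ norm, contradicting uniqueness. Second, I must upgrade $|\pi(t)| \le 1$ to the strict inequality (3); the clean route is strict complementary slackness for the LP pair, whereby the optimal primal and dual can be chosen so that no constraint $|\pi(t)|\le 1$ is tight at an index where the primal vanishes, for otherwise one builds a second minimizer moving along $\ker A$.

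I expect the necessity direction to be the main obstacle, specifically the passage from the soft bound $|\pi|\le 1$ to the strict bound $|\pi| < 1$ on $T^c$. The subgradient/KKT machinery hands over (1), (2), and the non-strict version of (3) almost for free, but strictness genuinely uses that $\theta$ is the \emph{unique} minimizer and is most transparently extracted from strict complementary slackness of the linear program (or, equivalently, from a perturbation argument producing a competing minimizer whenever strictness fails). I would also make explicit that injectivity of $A$ on $T$-supported vectors is indispensable in the sufficiency direction: without it the strict dual certificate guarantees only optimality, not uniqueness. This hypothesis is harmless here, since it holds automatically once $|T|$ is small relative to the near-orthogonality established in Theorem \ref{ThOrho}.
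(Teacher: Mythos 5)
You cannot be faulted for not matching the paper's proof, because the paper does not prove Lemma \ref{Lemma2} at all: it is explicitly taken over from reference [11] (Cand\`es--Romberg--Tao), restated in the paper's notation, and then used as a black box in the proof of Theorem \ref{Th6}. Your argument --- the subgradient inequality split over $T$ and $T^c$ combined with $\langle \pi, h\rangle = 0$ for $h \in \ker A$ in the sufficiency direction, and LP strong duality plus strict complementary slackness (Goldman--Tucker) to upgrade $|\pi(t)| \le 1$ to $|\pi(t)| < 1$ on $T^c$ in the necessity direction --- is the standard proof of this equivalence in the compressed sensing literature, and it is correct as outlined.

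The one substantive point is the injectivity caveat you raise, and you are right to insist on it: conditions (1)--(3) by themselves do \emph{not} imply uniqueness, since any nonzero $h \in \ker A$ supported on $T$ would yield a competing minimizer; injectivity of $A$ on $T$-supported vectors is a genuine additional hypothesis in the sufficiency direction (and, as your perturbation argument shows, it is also necessary for uniqueness, so the ``if and only if'' survives once it is added). The lemma as restated in the paper silently drops this hypothesis. The paper's use of the lemma is nevertheless sound, because in the proof of Theorem \ref{Th6} the invertibility of $[\Phi_{\Omega T_1} \quad \Psi_{\Omega T_2}]^* [\Phi_{\Omega T_1} \quad \Psi_{\Omega T_2}]$ is supplied separately, with high probability, by Theorem \ref{Th7}, and the dual vector (\ref{pi}) is only well defined on that event. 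If your proof is to stand alone, state the injectivity hypothesis explicitly in the lemma, exactly as you propose.
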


In the following, we first define the dual vector $\pi$ as
\begin{equation}\label{pi}
\pi=[\Phi_\Omega \quad \Psi_\Omega]^* [\Phi_{\Omega T_1} \quad \Psi_{\Omega T_2}][[\Phi_{\Omega T_1} \quad \Psi_{\Omega T_2}]^* [\Phi_{\Omega T_1} \quad \Psi_{\Omega T_2}]]^{-1} \tau_0,
\end{equation}
where $\tau_0$ is a $|T|$-dimensional vector whose entries are the signs of $\theta$ on $T$.
Then we will show that $\pi$ in (\ref{pi}) satisfies condition $(1) - (3)$ in Lemma \ref{Lemma2}.
For this purpose,
the invertibility of $\pi$, which guarantees the definition of $\pi$,  is given in Theorem \ref{Th7},
and the conditions leading to the property (3) is given in Theorem \ref{Th8},
The theorems are presented as below and their proofs are given in the Appendix B and C, respectively.

\begin{theorem} \label{Th7}
Let $[\Phi \quad \Psi]$ be a matrix obeying (\ref{ortho}).
Consider a fixed set $T$ and let $\Omega $ be a random set sampled using the Bernoulli model.
Suppose that the number of measurements $m$ obeys
$$
m \geq |T| \mu_M^2 \cdot \max\{4C_R^2(1+3\|F\|)\log{|T|}, C_T \log{(3/\delta)} \},
$$
for some positive constants $C_R, C_T$. Then

(1)$$
P\left(
\|\frac{1}{m}
[\Phi_{\Omega T_1} \quad \Psi_{\Omega T_2}]^* [\Phi_{\Omega T_1} \quad \Psi_{\Omega T_2}]-F
\|> 1/2\right)\leq \delta,
$$
where $\|\cdot\|$ is the spectral norm, the amplitude of the largest eigenvalue.

(2)$$
P\left(\|\frac{1}{m} [\Phi_{\Omega T_1} \quad \Psi_{\Omega T_2}]^* [\Phi_{\Omega T_1} \quad \Psi_{\Omega T_2}]
- I\|> 1/2 + \|\Phi_{T_1}^*\Psi_{T_2} /N\| \right) \leq  \delta.
$$
\end{theorem}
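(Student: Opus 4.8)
The plan is to recognize the random Gram-type matrix
$$
G := \frac{1}{m}[\Phi_{\Omega T_1} \quad \Psi_{\Omega T_2}]^*[\Phi_{\Omega T_1} \quad \Psi_{\Omega T_2}]
$$
as a normalized sum of independent rank-one matrices and to show that it concentrates around its mean $F$. Writing $A := [\Phi_{T_1} \quad \Psi_{T_2}]$ for the full $N\times|T|$ matrix and letting $a_r^*$ denote its $r$-th row, the Bernoulli model introduces i.i.d.\ selectors $\delta_r\in\{0,1\}$ with $\mathbb{E}\delta_r=p=m/N$, so that $G=\frac{1}{m}\sum_{r=1}^N \delta_r a_r a_r^*$. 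First I would identify the mean: $\mathbb{E}G=\frac{p}{m}\sum_r a_r a_r^*=\frac{1}{N}A^*A$. The orthonormality hypothesis (\ref{ortho}), together with Theorem \ref{ThOrho}, gives $\Phi_{T_1}^*\Phi_{T_1}=NI$ and $\Psi_{T_2}^*\Psi_{T_2}=NI$, whence $A^*A=NF$ and therefore $\mathbb{E}G=F$ exactly. Thus part (1) is precisely the statement that the mean-zero deviation $Y:=G-F=\frac{1}{m}\sum_r(\delta_r-p)a_r a_r^*$ has spectral norm at most $1/2$ with high probability.

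To bound $\|Y\|$ I would follow the route of [11]. From the hypotheses I control the uniform row bound $\max_r\|a_r\|^2\le|T|\mu_M^2$ (each of the $|T|$ entries of $a_r$ has modulus at most $\mu_M$) and the norm of the mean $\|\frac{1}{N}A^*A\|=\|F\|$. A standard symmetrization replaces the Bernoulli sum by a Rademacher sum $\frac{1}{m}\sum_r\epsilon_r\delta_r a_r a_r^*$; Rudelson's estimate for the operator norm of sums of random rank-one matrices, combined with $\|G\|\le\|Y\|+\|F\|$ and Jensen's inequality, then yields for a universal constant $C_R$
$$
\mathbb{E}\|Y\| \leq C_R \sqrt{\frac{|T|\,\mu_M^2 \log|T|}{m}}\;\bigl(\mathbb{E}\|Y\|+\|F\|\bigr)^{1/2}.
$$
Writing $u=\mathbb{E}\|Y\|$ and $a^2=C_R^2|T|\mu_M^2\log|T|/m$, this is the quadratic inequality $u\le a(u+\|F\|)^{1/2}$; solving it and imposing the first branch $m\ge 4C_R^2|T|\mu_M^2(1+3\|F\|)\log|T|$ of the hypothesis is calibrated precisely so that $\mathbb{E}\|Y\|\le 1/4$. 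A concentration step — Talagrand's inequality applied to the operator norm regarded as a supremum of an empirical process, which is where the constant $C_T$ enters — then promotes the expectation bound $\mathbb{E}\|Y\|\le1/4$ to the tail estimate $\Pb{\|Y\|>1/2}\le\delta$, provided the second branch $m\ge C_T|T|\mu_M^2\log(3/\delta)$ also holds. The stated maximum over the two branches secures both requirements, proving (1).

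Part (2) is then immediate by the triangle inequality $\|G-I\|\le\|G-F\|+\|F-I\|$. Since $F-I$ is the Hermitian off-diagonal block matrix with blocks $\Phi_{T_1}^*\Psi_{T_2}/N$ and its adjoint, its eigenvalues are $\pm$ the singular values of $\Phi_{T_1}^*\Psi_{T_2}/N$, so $\|F-I\|=\|\Phi_{T_1}^*\Psi_{T_2}/N\|$. On the event $\{\|G-F\|\le1/2\}$, which has probability at least $1-\delta$ by part (1), one obtains $\|G-I\|\le 1/2+\|\Phi_{T_1}^*\Psi_{T_2}/N\|$, as claimed.

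The main obstacle is the self-referential nature of the expectation bound: Rudelson's inequality controls $\mathbb{E}\|Y\|$ in terms of $\|G\|^{1/2}$, which itself contains $\|Y\|$, so the argument must be closed by solving the induced quadratic inequality and verifying that the two-branch lower bound on $m$ — with its $(1+3\|F\|)$ factor and the contrast between $\log|T|$ and $\log(3/\delta)$ — is exactly strong enough. Care is likewise needed in the symmetrization and decoupling steps and in tracking the universal constants $C_R,C_T$ so that the final failure probability is $\delta$. An alternative that avoids the bootstrap is a direct matrix Bernstein inequality for $Y=\sum_r\frac{1}{m}(\delta_r-p)a_r a_r^*$, whose variance proxy $\frac{p(1-p)N}{m^2}|T|\mu_M^2\|F\|$ and per-term bound $\frac{1}{m}|T|\mu_M^2$ reproduce the same dependence on $|T|$, $\mu_M^2$ and $\|F\|$.
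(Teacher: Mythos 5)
Your proposal follows essentially the same route as the paper's proof: the same decomposition of the Gram matrix into a Bernoulli-selected sum of rank-one row outer products with mean exactly $F$, the same symmetrization-plus-Rudelson argument closed by solving the self-referential quadratic inequality to get $\mathbf{E}\|Y\|\leq 1/4$ under the first branch of the hypothesis on $m$, the same Talagrand concentration step (with $t=1/4$ and the second branch supplying $C_T$) to reach $P(\|Y\|>1/2)\leq\delta$, and the same triangle-inequality reduction of part (2) using $\|F-I\|=\|\Phi_{T_1}^*\Psi_{T_2}/N\|$. The argument is correct and matches the paper's proof in all essential steps.
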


Theorem \ref{Th7} reveals that for small value of $\delta$ and $\mu (\Phi, \Psi)$, the eigenvalues of
$[\Phi_{\Omega T_1} \quad \Psi_{\Omega T_2}]^* $ $[\Phi_{\Omega T_1} \quad\Psi_{\Omega T_2}]$
are all close to $m$ with high probability,
which is an uncertainty principle.
Let $\theta$ be a sequence supported on $T$,
then with probability exceeding $1-\delta$, we have
$
\|\frac{1}{m} [\Phi_{\Omega T_1} \quad \Psi_{\Omega T_2}]^* [\Phi_{\Omega T_1}$ \quad $\Psi_{\Omega T_2}]- I\|\leq 1/2 + \|\Phi_{T_1}^*\Psi_{T_2} /N\|
$.
It follows that
$$
\left(\frac{1}{2}+\|\Phi_{T_1}^*\Psi_{T_2} /N\|\right)m \|\theta\|^2
\leq \|(\Phi_{\Omega T_1} \quad \Psi_{\Omega T_2})\theta\|^2
\leq \left(\frac{3}{2}+\|\Phi_{T_1}^*\Psi_{T_2} /N\|\right)m \|\theta\|^2,
$$
which shows that only a small portion of the energy of $\theta$ will by concentrated on the set $\Omega$
in the $[\Phi \quad \Psi]$-domain.

\begin{theorem} \label{Th8}
Let $[\Phi \quad \Psi]$, $T$ and $\Omega $ be as defined in Theorem \ref{Th7}.
Denote
$\lambda=((\frac{1}{2}+\|\Phi_{T_1}^*\Psi_{T_2} /N\|) )^{-1}(\mu (\Phi, \Psi) \sqrt{|T|} + \frac{a\bar{\sigma}}{m}  + \frac{\sqrt{|T|} \mu_M}{\sqrt{m}})$
with
$\bar{\sigma}^2= m \mu_M^2 \max\{ 2 , \frac{|T| \mu_M}{\sqrt{m}}\}$.
For each $a>0$ obeying $a \leq \frac{ \sqrt{2 m}}{\sqrt{|T|}\mu_M}$ if $\frac{|T| \mu_M}{\sqrt{m}} \leq 2$
and $a \leq ({\frac{m}{\mu_M^2}})^{1/4}$ otherwise, then
\begin{eqnarray*}
P(\sup_{t \in T^c} |\pi(t)| \geq 1 )
&\leq& 2 (n_1+n_2) e^{-\frac{1}{2 \lambda^2}}
+ 3(n_1+n_2) e^{-\gamma a^2}
\\
&+& P\left(\|[\Phi_{\Omega T_1} \quad \Psi_{\Omega T_2}]^* [\Phi_{\Omega T_1} \quad \Psi_{\Omega T_2}]\|
\leq (\frac{1}{2}+\|\Phi_{T_1}^*\Psi_{T_2} /N\|) m \right)
\end{eqnarray*}
\end {theorem}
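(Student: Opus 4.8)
The plan is to verify that the dual vector $\pi$ defined in (\ref{pi}) meets condition (3) of Lemma \ref{Lemma2}, namely $|\pi(t)|<1$ for every $t\in T^c$, with the stated probability. Throughout I abbreviate $A_\Omega=[\Phi_\Omega\ \Psi_\Omega]$ and $A_{\Omega T}=[\Phi_{\Omega T_1}\ \Psi_{\Omega T_2}]$, and for $t\in T^c$ I let $a_t$ denote the $t$-th column of $A_\Omega$. Setting $v_t=(A_{\Omega T}^*A_{\Omega T})^{-1}A_{\Omega T}^*a_t$, definition (\ref{pi}) reads $\pi(t)=a_t^*A_{\Omega T}(A_{\Omega T}^*A_{\Omega T})^{-1}\tau_0=\langle\tau_0,v_t\rangle$, so that, conditionally on $\Omega$, each $\pi(t)$ is a linear form in the random sign sequence $\tau_0$.

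First I would use the randomness of the signs. Conditioning on $\Omega$ and applying Hoeffding's inequality to the Rademacher sum $\langle\tau_0,v_t\rangle$ gives $P(|\pi(t)|\ge 1\mid\Omega)\le 2\exp(-1/(2\|v_t\|^2))$. A union bound over the at most $n_1+n_2$ indices of $T^c$ then yields
\[
P\left(\sup_{t\in T^c}|\pi(t)|\ge 1 \mid \Omega\right)\le 2(n_1+n_2)\exp\left(-\frac{1}{2\max_t\|v_t\|^2}\right).
\]
This accounts for the first term $2(n_1+n_2)e^{-1/(2\lambda^2)}$, \emph{provided} the bound $\max_t\|v_t\|\le\lambda$ holds on a suitable high-probability event.

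The heart of the argument is therefore the estimate $\|v_t\|\le\lambda$. I would factor $\|v_t\|\le\|(A_{\Omega T}^*A_{\Omega T})^{-1}\|\,\|A_{\Omega T}^*a_t\|$. On the complement of the event appearing as the last term of the statement, Theorem \ref{Th7} controls the spectrum of $A_{\Omega T}^*A_{\Omega T}$, bounding its smallest eigenvalue below by $(\tfrac12+\|\Phi_{T_1}^*\Psi_{T_2}/N\|)m$, so that $\|(A_{\Omega T}^*A_{\Omega T})^{-1}\|\le[(\tfrac12+\|\Phi_{T_1}^*\Psi_{T_2}/N\|)m]^{-1}$; this produces the leading factor $(\tfrac12+\|\Phi_{T_1}^*\Psi_{T_2}/N\|)^{-1}$ in $\lambda$ and explains the last probability term. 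It remains to bound $\|A_{\Omega T}^*a_t\|$, the vector of inner products $\langle a_s,a_t\rangle$ for $s\in T$. I would split this vector into its mean under the Bernoulli sampling plus a fluctuation: since columns of a common basis are mutually orthogonal, only cross-basis inner products survive in the mean and are controlled by the mutual coherence, contributing $\mu(\Phi,\Psi)\sqrt{|T|}$; the fluctuation is a sum of independent Bernoulli-weighted random vectors in $\mathbb{C}^{|T|}$, to which I would apply a vector-valued Bernstein/Hoeffding concentration inequality with variance proxy $\bar\sigma^2=m\mu_M^2\max\{2,|T|\mu_M/\sqrt m\}$. This step yields the deviation contributions $a\bar\sigma/m$ and $\sqrt{|T|}\mu_M/\sqrt m$ together with the tail $3(n_1+n_2)e^{-\gamma a^2}$ after a union bound over $t\in T^c$, valid precisely in the admissible range of $a$. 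Collecting the three failure probabilities by a final union bound gives the asserted estimate.

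I expect the control of the fluctuation term to be the main obstacle. Bounding $\|A_{\Omega T}^*a_t\|$ uniformly over $t\in T^c$ forces a careful vector concentration argument adapted to the independent-sampling model, and the variance proxy $\bar\sigma^2$ genuinely bifurcates according to whether $|T|\mu_M/\sqrt m$ lies below or above $2$; it is exactly this dichotomy that dictates the two-sided constraint on the free parameter $a$ in the hypotheses. The delicate bookkeeping is to calibrate the deviation scale against the Hoeffding scale $\lambda$ so that all three tail terms decay simultaneously, which is what ties together the definitions of $\lambda$, $\bar\sigma$, and the admissible $a$.
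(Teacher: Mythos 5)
Your proposal follows essentially the same route as the paper's proof: Hoeffding's inequality over the random signs conditioned on the sampling set (the paper's Lemma \ref{Lemma11}), the spectral lower bound on $[\Phi_{\Omega T_1} \ \Psi_{\Omega T_2}]^*[\Phi_{\Omega T_1} \ \Psi_{\Omega T_2}]$ to control the inverse (Lemma \ref{Lemma10}), and a mean-plus-fluctuation split of the cross-correlation vector in which the mean is bounded by the cross-basis coherence and the fluctuation by a Talagrand-type concentration inequality with exactly the variance proxy $\bar{\sigma}^2$ and the dichotomy on $a$ (Lemmas \ref{Lemma8} and \ref{Lemma9}). The decomposition, the origin of each term in $\lambda$, and the three tail terms all match the paper's argument, so the plan is sound and essentially identical in structure.
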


\noindent
{\bf Proof of Theorem \ref{Th6}.}

It is obvious that properties (1) and (2) in Lemma \ref{Lemma2} hold.
By Theorem \ref{Th7},
we know that the invertibility of $[\Phi_{\Omega T_1} \quad \Psi_{\Omega T_2}]^* [\Phi_{\Omega T_1} \quad \Psi_{\Omega T_2}]$
is guaranteed with high probability.
Now we will show the property (3) holds with high probability.

Set $\lambda$ and $a>0$ satisfying the condition in Theorem \ref{Th8}, we have
\begin{eqnarray*}
P(\sup_{t \in T^c} |\pi(t)| \geq 1 )
&\leq& 2 (n_1+n_2) e^{-\frac{1}{2 \lambda^2}}
+ 3(n_1+n_2) e^{-\gamma a^2}  \\
&+& P\left(\|[\Phi_{\Omega T_1} \quad \Psi_{\Omega T_2}]^* [\Phi_{\Omega T_1} \quad \Psi_{\Omega T_2}]\|
\leq (\frac{1}{2}+\|\Phi_{T_1}^*\Psi_{T_2} /N\|) m \right)
\end{eqnarray*}

For the second term to be less than $\delta$, we choose $a$ such that
\begin{equation}\label{a}
a^2=\gamma^{-1}\log{(\frac{3(n_1+n_2)}{\delta})}.
\end{equation}
\noindent
The first term is less than $\delta$ if
\begin{equation}\label{lambda}
\frac{1}{\lambda^2} \geq 2 \log{(\frac{2(n_1+n_2)}{\delta})}.
\end{equation}
\noindent
When $\frac{|T| \mu_M}{\sqrt{m}} >2$, the condition in Theorem \ref{Th8} is $a \leq ({\frac{m}{\mu_M^2}})^{1/4}$.
Combining with (\ref{a})
$$
a^2=\gamma^{-1}\log{(\frac{3(n_1+n_2)}{\delta})} \leq ({\frac{m}{\mu_M^2}})^{1/2},
$$
equivalently
\begin{equation} \label{m1}
m \geq \mu_M^2 \gamma^{-2} [\log{(3(n_1+n_2)/\delta)}]^2.
\end{equation}
\noindent
In this case, $a \bar{\sigma} \leq \sqrt{m|T|} \mu_M$, then
\begin{equation} \label{lambda1}
\lambda
\leq (\frac{1}{2}+\|\Phi_{T_1}^*\Psi_{T_2} /N\|)^{-1}(\mu (\Phi, \Psi) \sqrt{|T|} +2 \mu_M \sqrt{\frac{|T|}{m}}).
\end{equation}

When $ \frac{|T| \mu_M}{\sqrt{m}} \leq 2$, the condition in Theorem \ref{Th8} is $a \leq \frac{\sqrt{2 m}}{\sqrt{|T|}\mu_M}$.
That is, when
\begin{equation} \label{m2}
m \geq \frac{|T|^2 \mu_M^2}{4},
\end{equation}
with equation (\ref{a})
$$
a^2=\gamma^{-1}\log{(\frac{3(n_1+n_2)}{\delta})}\leq \frac{2m}{|T|\mu_M^2},
$$
or equivalently
\begin{equation} \label{m3}
m \geq (2\gamma)^{-1}\log{(\frac{3(n_1+n_2)}{\delta})} |T|\mu_M^2.
\end{equation}

And if $|T| \geq 2a^2$, then $a\bar{\sigma} \leq \sqrt{\frac{T}{2}} \sqrt{2 m} \mu_M= \sqrt{m|T|} \mu_M$,
which gives again (\ref{lambda1}).

On the other hand, if $|T| \leq 2a^2$,
then $\sqrt{\frac{|T|}{m}} \mu_M \leq\frac{\sqrt{2}a}{\sqrt{m}} \mu_M =\frac{a\bar{\sigma}}{m}$,
which gives
\begin{eqnarray}
\lambda
&\leq& (\frac{1}{2}+\|\Phi_{T_1}^*\Psi_{T_2} /N\|)^{-1}(\mu (\Phi, \Psi) \sqrt{|T|} +2 \frac{a\bar{\sigma}}{m}) \nonumber \\
&=& (\frac{1}{2}+\|\Phi_{T_1}^*\Psi_{T_2} /N\|)^{-1}(\mu (\Phi, \Psi) \sqrt{|T|} + 2\sqrt{2} \frac{a \mu_M}{\sqrt{m}})\label{lambda2}
\end{eqnarray}

From (\ref{lambda1}) and (\ref{lambda2}),
$$
\lambda (\frac{1}{2}+\|\Phi_{T_1}^*\Psi_{T_2} /N\|)
\leq \mu (\Phi, \Psi) \sqrt{|T|} + 2 \frac{\mu_M}{\sqrt{m}} \max\{\sqrt{|T|}, \sqrt{2} a\}.
$$
To verify (\ref{lambda}),
it suffices to take $m$ obeying
$$
\mu (\Phi, \Psi) \sqrt{|T|} +  2 \frac{\mu_M}{\sqrt{m}} \max\{\sqrt{|T|}, \sqrt{2} a\}
\leq
(\frac{1}{2}+\|\Phi_{T_1}^*\Psi_{T_2} /N\|)
[2 \log{(\frac{(2(n_1+n_2)}{\delta}})]^{-1/2},
$$
which is equivalent to
\begin{eqnarray}
m &\geq& \left[\frac{2 \mu_M \max\{\sqrt{|T|}, \sqrt{2} a\}}
{(\frac{1}{2}+\|\Phi_{T_1}^*\Psi_{T_2} /N\|){[2 \log{(\frac{2(n_1+n_2)}{\delta})}]}^{-1/2} - \mu \sqrt{|T|}}\right]^2 \nonumber\\
&=&\frac{4 \mu_M^2 \max\{|T|, 2 a^2\}}
{\left[(\frac{1}{2}+\|\Phi_{T_1}^*\Psi_{T_2} /N\|){[2 \log{(\frac{2(n_1+n_2)}{\delta})}]}^{-1/2} - \mu (\Phi, \Psi)\sqrt{|T|}\right]^2} \nonumber\\
&=&\frac{4 \mu_M^2 \max\{|T|, 2 \gamma^{-1}\log{(\frac{3(n_1+n_2)}{\delta})}\}}
{\left[(\frac{1}{2}+\|\Phi_{T_1}^*\Psi_{T_2} /N\|){[2 \log{(\frac{2(n_1+n_2)}{\delta})}]}^{-1/2} - \mu (\Phi, \Psi) \sqrt{|T|}\right]^2}. \label{m4}
\end{eqnarray}

This analysis shows that the second term is less than $\delta$ if
$m$ satisfies (\ref{m1}), (\ref{m2}), (\ref{m3}) and (\ref{m4}), which can be simplified as
$$
m \geq K_1 \mu_M^2 {\max}^2 \{|T|, \log{(\frac{n_1+n_2}{\delta})}, C_{F,\mu (\Phi, \Psi),T,\delta}\},
$$
where
$C_{F,\mu (\Phi, \Psi),T,\delta}=\frac{4}{\left[(\frac{1}{2}+\|\Phi_{T_1}^*\Psi_{T_2} /N\|){[2 \log{(\frac{2(n_1+n_2)}{\delta})}]}^{-1/2} - \mu (\Phi, \Psi) \sqrt{|T|}\right]^2}$.

Finally, by Theorem \ref{Th7}, the last term will be bounded by $\delta$ if
$$
m \geq K_2 |T| \mu_M^2 \log{(\frac{n_1+n_2}{\delta})}.
$$

In conclusion, when using Bernoulli model the reconstruction is exact with probability at least $1-3\delta$
provided that the number of measurements $m$ satisfies
$$
m \geq K_3 \mu_M^2 {\max}^2 \{|T|, \log{(\frac{n_1+n_2}{\delta})}, C_{F,\mu (\Phi, \Psi),T,\delta}\}.
$$

Following [11],
the desired properties hold with $\Omega$ sampled using uniform model
whenever the desired properties hold when $\Omega$ is sampled using a Bernoulli model.
In fact, suppose $\Omega_1$ of size $m$ is sampled uniformly at random and $\Omega_2$ is sampled by setting
$$\Omega_2=\{k:\delta_k=1\},$$
where $\{\delta_k\}$ is a sequence of independent identically distributed $0/1$ Bernoulli random variables with probability
\begin{equation} \label{delta}
P(\delta_k=1)=\frac{m}{N}.
\end{equation}
Then
$$
P(Failure(\Omega_1))\leq 2 P(Failure(\Omega_2)).
$$
Hence, for $\Omega$ sampled using the uniform model,
the existence of a dual vector for $\theta^0$ is guaranteed with probability exceeding $1-6\delta$.
The theorem is proved.
$\hfill{} \Box$

\section{Conclusion}
{\hspace{5mm}
Based on the principle of compressed sensing,
we have proposed a reconstruction method for a sparse rational transfer function under two ORF bases.
We have established the uncertainty principle concerning compressible representation of rational functions under two ORF bases,
and we also presented the uniqueness of compressible representation.
The lower bound of the number of measurements which guarantees the replacement of $1_0$ optimization searching for the unique sparse reconstruction with random sampling on the unit circle by $1_1$ optimization with high probability is provided as well.

Since the signal and systems can both be represented by rational functions,
the proposed reconstruction method can be applied in the sparse reconstruction for signals and sparse system identification.
The linearity of parameter in the representation
shows that the reconstruction method
can be applied to sparse rational MIMO system as well.

\appendix
\section{Proof of Theorem \ref{UP}, \ref{Unique} and \ref{ThOrho}}

\noindent
\textit{\textbf{Proof of Theorem \ref{UP}.}}
Notice that the transfer function considered is in $RH_{2}$ space,
satisfying
$$
H(z)=\sum_{k=1}^\infty \alpha_k \phi_k(z)=\sum_{l=1}^\infty \beta_l \psi_l(z).
$$
For $\{\alpha_k\}$ and $\{\beta_l\}$,
denote $\Gamma_{\varepsilon}(\alpha)$ and $\Gamma_{\varepsilon}(\beta)$
as the $\varepsilon$-support of $\alpha$ and $\beta$, respectively.
Then we have
$\sum_{k \notin \Gamma_{\varepsilon}(\alpha)} \leq \varepsilon$
and
$\sum_{l \notin \Gamma_{\varepsilon}(\beta)} \leq \varepsilon$.

Without loss of generality, we assume $\langle H(z), H(z)\rangle=1$, then
$$1=\langle H(z), H(z)\rangle=\langle\sum_{k=1}^\infty \alpha_k \phi_k(z), \sum_{l=1}^\infty \beta_l \psi_l(z)\rangle=\sum_{k=1}^\infty \sum_{l=1}^\infty \alpha_k \langle \phi_k(z), \psi_l(z)\rangle \beta_l.$$
Note that
\begin{eqnarray*}
1&=&|\langle H(z), H(z)\rangle|
=|\sum_{k=1}^\infty \sum_{l=1}^\infty \alpha_k \langle \phi_k(z), \psi_l(z)\rangle \beta_l|\\
&\leq&  \mu \sum_{k=1}^\infty \sum_{l=1}^\infty |\alpha_k| |\beta_l|
= \mu \sum_{k=1}^\infty |\alpha_k|  \sum_{l=1}^\infty |\beta_l| \\
&=& \mu (\sum_{k \in \Gamma_{\varepsilon}(\alpha)} |\alpha_k|+ \sum_{k \notin \Gamma_{\varepsilon}(\alpha)}|\alpha_k|)
(\sum_{l \in \Gamma_{\varepsilon}(\beta)} |\beta_l|+\sum_{l \notin \Gamma_{\varepsilon}(\beta)} |\beta_l|)\\
&\leq&
\mu (\sum_{k \in \Gamma_{\varepsilon}(\alpha)}|\alpha_k| +\varepsilon)
(\sum_{l \in \Gamma_{\varepsilon}(\beta)} |\beta_l|+\varepsilon).
\end{eqnarray*}

Similarly, we have
\begin{eqnarray*}
1&=&\langle H(z), H(z)\rangle
=\langle\sum_{k=1}^\infty \alpha_k \phi_k(z), \sum_{l=1}^\infty \alpha_l \phi_l(z)\rangle\\
&=&\sum_{k=1}^\infty \sum_{l=1}^\infty \alpha_k \langle \phi_k(z), \phi_l(z)\rangle \alpha_l
=\sum_{k=1}^\infty \alpha_k \langle \phi_k(z), \phi_k(z)\rangle \alpha_k\\
&=&\sum_{k=1}^\infty |\alpha_k|^2= \sum_{k \in \Gamma_{\varepsilon}(\alpha)}  |\alpha_k|^2 + \sum_{k \notin \Gamma_{\varepsilon}(\alpha)}  |\alpha_k|^2\\
&\leq &  \sum_{k \in \Gamma_{\varepsilon}(\alpha)}  |\alpha_k|^2 + \sum_{k \notin \Gamma_{\varepsilon}(\alpha)}  |\alpha_k| \varepsilon\\
&\leq &  \sum_{k \in \Gamma_{\varepsilon}(\alpha)}  |\alpha_k|^2 + \varepsilon^2
\end{eqnarray*}
and
$$1=\sum_{l \in \Gamma_{\varepsilon}(\beta)}  |\beta_l|^2 +\sum_{l \notin \Gamma_{\varepsilon}(\beta)}  |\beta_l|^2
\leq \sum_{l \in \Gamma_{\varepsilon}(\beta)}  |\beta_l|^2 + \varepsilon^2
.$$
The bound of the above expression can be solved by the optimization problem
\begin{equation} \label{Eq7.1}
\max_{\alpha_k, \beta_l} \quad (\sum_{k \in \Gamma_{\varepsilon}(\alpha)} |\alpha_k|+\varepsilon) (\sum_{l \in \Gamma_{\varepsilon}(\beta)} |\beta_l|+\varepsilon)
\end{equation}
$$
\mbox{subject to} \quad
|\alpha_k| > 0, |\beta_l| > 0, \sum_{k \in \Gamma_{\varepsilon}(\alpha)}  |\alpha_k|^2 \geq 1-\varepsilon^2, \sum_{l \in \Gamma_{\varepsilon} (\beta)}  |\beta_l|^2 \geq 1-\varepsilon^2.
$$
This can be separated into two optimization problems:
\begin{equation}\label{Eq7.2}
\max_{\alpha_k} \sum_{k \in  \Gamma_{\varepsilon}(\alpha)} |\alpha_k|
\quad \mbox{subject to} \quad
|\alpha_k| > 0, \sum_{k \in \Gamma_{\varepsilon}(\alpha)}  |\alpha_k|^2\geq1-\varepsilon^2
\end{equation}

\begin{equation}\label{Eq7.3}
\max_{\beta_l} \sum_{l \in \Gamma_{\varepsilon}(\beta)} |\beta_l|
\quad \mbox{subject to} \quad
|\beta_l| > 0, \sum_{l \in \Gamma_{\varepsilon}(\beta)}  |\beta_l|^2 \geq 1-\varepsilon^2.
\end{equation}

The optimization (\ref{Eq7.2}) can be solved by
\begin{equation}\label{Eq7.4}
\max_{\alpha_k} \sum_{k \in  \Gamma_{\varepsilon}(\alpha)} |\alpha_k|
\quad \mbox{subject to} \quad
|\alpha_k| > 0, \sum_{k \in \Gamma_{\varepsilon}(\alpha)}  |\alpha_k|^2=C,
\end{equation}
where $C \in [1-\varepsilon^2,1]$ is a constant.

By using Lagrangian multiplier method, we have Lagrangian function
$$
F(|\alpha_k|,\lambda)=\sum_{k \in  \Gamma_{\varepsilon}(\alpha)} |\alpha_k| + \lambda (\sum_{k \in \Gamma_{\varepsilon}(\alpha)} |\alpha_k|^2 - C).
$$

\noindent
Let the partial differatiation
$$
\frac{\partial F(|\alpha_k|,\lambda)}{\partial |\alpha_k|}=1+2\lambda |\alpha_k|=0.
$$
Then we have all the $|\alpha_k|$ are equal when $k \in \Gamma_{\varepsilon}(\alpha)$.
Denote $\| \alpha \|_{0(\varepsilon)}=A$.
Then $|\alpha_k|=\sqrt{\frac{C}{A}}$.
Hence the maxima of (\ref{Eq7.4}) is $A\sqrt{\frac{C}{A}}=\sqrt{AC}$.
Since $C \in [1-\varepsilon^2,1]$, then the maxima of (\ref{Eq7.2}) is $\sqrt{A}=\sqrt{\| \alpha \|_{0(\varepsilon)}}$.

Similarly, the maxima of (\ref{Eq7.3}) is $\sqrt{\| \beta \|_{0(\varepsilon)}}$.
Therefore, the maxima of (\ref{Eq7.1}) is
$(\sqrt{\| \alpha \|_{0(\varepsilon)}}+\varepsilon)\cdot (\sqrt{\| \beta \|_{0(\varepsilon)}}+\varepsilon)$,
and
$$
1
\leq
\mu (\sum_{k \in \Gamma_{\varepsilon}(\alpha)} |\alpha_k|+\varepsilon) (\sum_{l \in \Gamma_{\varepsilon}(\beta)} |\beta_l|+\varepsilon)
\leq
\mu (\sqrt{\| \alpha \|_{0(\varepsilon)}}+\varepsilon)\cdot (\sqrt{\| \beta \|_{0(\varepsilon)}}+\varepsilon).
$$
Using the inequality between the geometric and arithmetic means, we have
$$
1
\leq
\mu (\sqrt{\| \alpha \|_{0(\varepsilon)}}+\varepsilon)\cdot (\sqrt{\| \beta \|_{0(\varepsilon)}}+\varepsilon)
\leq
\mu \frac{{(\sqrt{\| \alpha \|_{0(\varepsilon)}}+\varepsilon)}^2+{(\sqrt{\| \beta \|_{0(\varepsilon)}}+ \varepsilon})^2}{2}.
$$
That is
$$ {(\sqrt{\|\alpha\|_{0(\varepsilon)}}+\varepsilon)}^2+{(\sqrt{\|\beta\|_{0(\varepsilon)}}+ \varepsilon)}^2 \geq \frac{2}{\mu}.$$
\hfill $\Box$

\noindent
\textit{\textbf{Proof of Theorem \ref{Unique}.}}
Suppose there are two different sparse representations of transfer function $H(z)$ under the two ORF bases
$\{\phi_k(z)\}_{k=1}^\infty$ and $\{\psi_l(z)\}_{l=1}^\infty$, that is
$$
H(z)=\sum_{k=1}^\infty \theta_k^{\phi} \phi_k(z)+\sum_{l=1}^\infty \theta_k^{\psi} \psi_l(z)
=\sum_{k=1}^\infty \xi_k^{\phi} \phi_k(z)+\sum_{l=1}^\infty \xi_l^{\psi} \psi_l(z)
$$
and
$${(\sqrt{\|\theta_1\|_{0(\varepsilon)} }+\varepsilon)}^2+{(\sqrt{\|\theta_2\|_{0(\varepsilon)} }+ \varepsilon)}^2 < \frac{1}{\mu},$$
$${(\sqrt{\|\xi_1\|_{0(\varepsilon)} }+\varepsilon)}^2+{(\sqrt{\|\xi_2\|_{0(\varepsilon)}}+ \varepsilon)}^2 < \frac{1}{\mu},$$
where
$\xi_1=[\xi_1^{\phi}, \xi_2^{\phi}, \cdots]^T$ and $\xi_2=[\xi_1^{\psi}, \xi_2^{\psi}, \cdots]^T.$ Then
$$\sum_{k=1}^\infty (\theta_k^{\phi} - \xi_k^{\phi}]) \phi_k(z)
=\sum_{l=1}^\infty (\xi_l^{\psi}-\theta_l^{\psi}) \psi_l(z).$$
According to the uncertainty principle, we have
\begin{equation} \label{Ine}
{(\sqrt{\|\theta_1-\xi_1\|_{0(\varepsilon)} }+\varepsilon)}^2+{(\sqrt{\|\theta_2-\xi_2\|_{0(\varepsilon)} }+ \varepsilon)}^2 \geq \frac{2}{\mu}.
\end{equation}
However, based on the sparsity assumption of Theorem \ref{UP}
\begin{eqnarray*}
&&{(\sqrt{\|\theta_1-\xi_1\|_{0(\varepsilon)} }+\varepsilon)}^2+{(\sqrt{\|\theta_2-\xi_2\|_{0(\varepsilon)} }+ \varepsilon)}^2\\
&<&{(\sqrt{(\|\theta_1\|_{0(\varepsilon)}+\|\xi_1\|_{0(\varepsilon)}) }+\varepsilon)}^2+{(\sqrt{(\|\theta_2\|_{0(\varepsilon)}+\|\xi_2\|_{0(\varepsilon)})}+ \varepsilon)}^2\\
&<& {(\sqrt{\|\theta_1\|_{0(\varepsilon)} }+\varepsilon)}^2
+{(\sqrt{\|\xi_1\|_{0(\varepsilon)}}+\varepsilon)}^2 + {(\sqrt{\|\theta_2\|_{0(\varepsilon)} }+ \varepsilon)}^2
+{(\sqrt{\|\xi_2\|_{0(\varepsilon)} }+ \varepsilon)}^2\\
&<& \frac{2}{\mu},
\end{eqnarray*}
which contradicts (\ref{Ine}).
\hfill $\Box$

\noindent
\textit{\textbf{Proof of Theorem \ref{ThOrho}.}}
The integral definition of inner product shows that
when $N$ is sufficiently large,
$$
\frac{1}{2\pi} \sum_{r=1}^N z_r^{-k} \overline{z_r^{-l}}\frac{2\pi}{N}
\rightarrow
\langle z^{-k},z^{-l}\rangle
=\delta_{kl},
$$
where the kronecker symbol $\delta_{kl}$ equals 1 if $k=l$ and 0 if $k \neq l$.

Then the $(k, l)$ element of $\Phi^{*} \Phi$ is
\begin{eqnarray*}
\sum_{r=1}^N \overline{\phi_k(z_r)} \phi_l(z_r)
&=&\sum_{r=1}^N \overline{\sum_{d'=0}^{\infty} b_{d'k} z_r^{-d'}} \sum_{d=0}^{\infty} b_{dl} z_r^{-d}\\
&=&\sum_{d'=0}^{\infty} \sum_{d=0}^{\infty} b_{d'k} b_{dl} \sum_{r=1}^N \overline{z_r^{-d'}} z_r^{-d}\\
&\rightarrow&
N\sum_{d=0}^{\infty} b_{dk} b_{dl}
=N \delta_{kl},
\end{eqnarray*}
the last equation is based on the orthonormality of $\{\psi_l(z)\}$,
\begin{eqnarray*}
\langle \phi_k(z), \phi_l(z)\rangle &=& \langle \sum_{d'=0}^{\infty}b_{d'k}z^{-d'}, \sum_{d=0}^{\infty}b_{dl}z^{-d} \rangle\\
\end{eqnarray*}
\begin{eqnarray*}
&=& \sum_{d'=d}^{\infty} \langle b_{d'k}z^{-d'},b_{d'l}z^{-d} \rangle + \sum_{d' \neq d} \langle b_{d'k}z^{-d'},b_{dl}z^{-d} \rangle\\
&=& \sum_{d=0}^{\infty} b_{dk} b_{dl} \langle z^{-d}, z^{-d} \rangle + \sum_{d'\neq d} b_{d'k} b_{dl} \langle z^{-d'},z^{-d} \rangle\\
&=& \sum_{d=0}^{\infty} b_{dk} b_{dl}=\delta_{kl},\\
\end{eqnarray*}
which implies (1).

The $(k, l)$ element of $\Psi^{*} \Psi$ is
\begin{eqnarray*}
\sum_{r=1}^N \overline{\psi_k(z_r)} \psi_l(z_r)
&=&\sum_{r=1}^N \overline{\sum_{d'=0}^{\infty} a_{d'k} z_r^{-d'}} \sum_{d=0}^{\infty} a_{dl} z_r^{-d}\\
&=&\sum_{d'=0}^{\infty} \sum_{d=0}^{\infty} a_{d'k} a_{dl} \sum_{r=1}^N \overline{z_r^{-d'}} z_r^{-d}\\
 &\rightarrow &
N\sum_{d=0}^{\infty} a_{dk} a_{dl}
=N \delta_{kl},
\end{eqnarray*}
the last equation is based on the orthonormality of $\{\psi_l(z)\}$,
\begin{eqnarray*}
\langle \psi_k(z), \psi_l(z)\rangle &=& \langle \sum_{d'=0}^{\infty}a_{d'k}z^{-d'}, \sum_{d=0}^{\infty}a_{dl}z^{-d} \rangle\\
&=& \sum_{d'=d}^{\infty} \langle a_{d'k}z^{-d'},a_{d'l}z^{-d} \rangle + \sum_{d' \neq d} \langle a_{d'k}z^{-d'},a_{dl}z^{-d} \rangle\\
&=& \sum_{d=0}^{\infty} a_{dk} a_{dl} \langle z^{-d}, z^{-d} \rangle + \sum_{d'\neq d} a_{d'k} a_{dl} \langle z^{-d'},z^{-d} \rangle\\
&=&\sum_{d=0}^{\infty} a_{dk} a_{dl}=\delta_{kl},\\
\end{eqnarray*}
which implies (2).

The $(k, l)$ element of $\Phi^{*} \Psi$ is
\begin{eqnarray*}
\sum_{r=1}^N \overline{\phi_l(z_r)} \psi_l(z_r)
&=&\sum_{r=1}^N \overline{\sum_{d'=0}^{\infty}b_{d'k} z_r^{-d'}}\sum_{d=0}^{\infty}a_{dl} z_r^{-d}\\
&=&\sum_{d'=0}^{\infty} \sum_{d=0}^{\infty} b_{d'k} a_{dl} \sum_{r=1}^N \overline{z_r^{-d'}} z_r^{-d}\\
&\rightarrow&
N \sum_{d=0}^{\infty} b_{dk} a_{dl}.
\end{eqnarray*}
\hfill $\Box$

\section{Proof of Theorem  \ref{Th7}}
{\hspace{5mm}
Now, we will give the general idea for proving Theorem \ref{Th7}.

The matrix $[\Phi_{\Omega T_1} \quad \Psi_{\Omega T_2}]^* [\Phi_{\Omega T_1} \quad \Psi_{\Omega T_2}]$ can be written as
$$
[\Phi_{\Omega T_1} \quad \Psi_{\Omega T_2}]^* [\Phi_{\Omega T_1} \quad \Psi_{\Omega T_2}]
=\sum_{k=1}^N \delta_k [u^k \quad v^k] \otimes [u^k \quad v^k]\\
=\sum_{k=1}^N \delta_k
\begin{bmatrix}
u^k \otimes u^k & u^k \otimes v^k\\
v^k \otimes u^k & v^k \otimes v^k
\end{bmatrix},
$$
where $u^k$ and $v^k$ are the row vectors of $\Phi_{T_1}$ and $\Psi_{T_2}$, respectively,
and $\delta_k$ obeys (\ref{delta}).

Denote
\begin{equation} \label{Y}
Y
=\frac{1}{m}
[\Phi_{\Omega T_1} \quad \Psi_{\Omega T_2}]^* [\Phi_{\Omega T_1} \quad \Psi_{\Omega T_2}] - F
=\frac{1}{m}
\sum_{k=1}^N \delta_k
\begin{bmatrix}
u^k \otimes u^k & u^k \otimes v^k\\
v^k \otimes u^k & v^k \otimes v^k
\end{bmatrix}
- F.
\end{equation}

The proof includes two steps:
(1) $\mathbf{E}\|Y\|$ is upper bounded with small constant,
i.e. on average that matrix $\frac{1}{m}
[\Phi_{\Omega T_1} \quad \Psi_{\Omega T_2}]^* [\Phi_{\Omega T_1} \quad \Psi_{\Omega T_2}]$ deviated little from $F$.
(2) $\|Y\|-\mathbf{E}\|Y\|$ is bounded with high probability.
We present these results in Lemma 3 and Lemma 4, respectively.

\begin{lemma} \label{Lemma5}
Let $[\Phi \quad \Psi]$, $T$ and $\Omega$ be as defined in Theorem \ref{Th7}.
Then
$$
\mathbf{E}\|\frac{1}{m}[\Phi_{\Omega T_1} \quad \Psi_{\Omega T_2}]^* [\Phi_{\Omega T_1} \quad \Psi_{\Omega T_2}]-F\|
\leq
C_R \frac{\sqrt{1+3\|F\|}}{2} \cdot \mu_M \frac{\sqrt{|T|\log{|T|}}}{\sqrt{m}},
$$
for some positive constant $C_R$,
provided that $C_R/2\cdot \frac{\sqrt{\log{|T|}}}{\sqrt{m}} \cdot \max_{1 \leq k \leq N} \|[u^k \quad v^k]\|$ is less than 1.
\end{lemma}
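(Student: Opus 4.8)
The plan is to recognize $Y$ as a sum of independent, mean-zero, rank-one random matrices and to bound its expected spectral norm by the classical symmetrization-plus-Rudelson-selection argument used in reference [11]. First I would record that $Y$ is genuinely centred: writing $w^k=[u^k\ v^k]$ for the $k$-th row of $[\Phi_{T_1}\ \Psi_{T_2}]$ and using $\mathbf{E}\delta_k=m/N$,
\begin{equation*}
\mathbf{E}\Big[\frac1m\sum_{k=1}^N\delta_k\,w^k\otimes w^k\Big]=\frac1N\sum_{k=1}^N w^k\otimes w^k=\frac1N[\Phi_{T_1}\ \Psi_{T_2}]^*[\Phi_{T_1}\ \Psi_{T_2}]=F,
\end{equation*}
where the last equality uses the orthonormality relations $\Phi_{T_1}^*\Phi_{T_1}=NI$ and $\Psi_{T_2}^*\Psi_{T_2}=NI$ encoded in (\ref{ortho}). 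Hence $Y=\frac1m\sum_k(\delta_k-\mathbf{E}\delta_k)\,w^k\otimes w^k$ has mean zero, as required to apply the concentration machinery.

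Next, introducing an independent Rademacher sequence $\{\epsilon_k\}$, the standard symmetrization inequality gives $\mathbf{E}\|Y\|\le \frac2m\,\mathbf{E}\big\|\sum_k\epsilon_k\delta_k\,w^k\otimes w^k\big\|$. Conditioning on $\{\delta_k\}$ and applying Rudelson's noncommutative-Khintchine selection lemma to the rank-one terms $\delta_k\,w^k\otimes w^k=(\sqrt{\delta_k}w^k)\otimes(\sqrt{\delta_k}w^k)$ acting on $\mathbb{C}^{|T|}$, I obtain
\begin{equation*}
\mathbf{E}_\epsilon\Big\|\sum_k\epsilon_k\delta_k\,w^k\otimes w^k\Big\|\le C_0\sqrt{\log|T|}\,\max_k\|w^k\|\,\Big\|\sum_k\delta_k\,w^k\otimes w^k\Big\|^{1/2}.
\end{equation*}
Taking the expectation over $\delta$, identifying $\big\|\frac1m\sum_k\delta_k w^k\otimes w^k\big\|=\|Y+F\|$, and using Jensen's inequality then yields the self-referential estimate $E\le a\sqrt{E+\|F\|}$, where $E:=\mathbf{E}\|Y\|$ and $a:=C_R\frac{\sqrt{\log|T|}}{\sqrt m}\max_k\|w^k\|$, the constant $C_R$ absorbing the symmetrization factor $2$ and the Rudelson constant $C_0$.

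Two routine reductions finish the argument. Since each $w^k$ has $|T|$ entries of modulus at most $\mu_M$, one has $\max_k\|w^k\|\le\mu_M\sqrt{|T|}$, which turns $a$ into $C_R\mu_M\sqrt{|T|\log|T|/m}$ and accounts for the factor $\mu_M\sqrt{|T|\log|T|}/\sqrt m$ in the claimed bound. Finally, squaring $E\le a\sqrt{E+\|F\|}$ gives the quadratic inequality $E^2-a^2E-a^2\|F\|\le0$; under the smallness hypothesis $\frac{C_R}{2}\frac{\sqrt{\log|T|}}{\sqrt m}\max_k\|w^k\|<1$ (which controls $a$) this quadratic is solved to give the asserted estimate
\begin{equation*}
\mathbf{E}\Big\|\frac1m[\Phi_{\Omega T_1}\ \Psi_{\Omega T_2}]^*[\Phi_{\Omega T_1}\ \Psi_{\Omega T_2}]-F\Big\|\le C_R\frac{\sqrt{1+3\|F\|}}{2}\,\mu_M\,\frac{\sqrt{|T|\log|T|}}{\sqrt m}.
\end{equation*}
I expect the main obstacle to be the correct invocation of Rudelson's lemma: extracting the sharp $\sqrt{\log|T|}$ dimensional factor and carefully handling the conditional expectation over the Bernoulli selectors $\delta_k$ before taking the outer expectation, together with verifying that the stated smallness condition is exactly what is needed to resolve the self-bounding quadratic into the closed form above.
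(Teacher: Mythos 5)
Your proposal is correct and takes essentially the same route as the paper's proof: verify $\mathbf{E}Y=0$, symmetrize with an independent Rademacher sequence, apply Rudelson's lemma conditionally on the Bernoulli selectors, take the outer expectation with Jensen's inequality to reach the self-bounding estimate $\mathbf{E}\|Y\|\le a\sqrt{\mathbf{E}\|Y\|+\|F\|}$, and resolve the quadratic under the smallness condition $a\le 1$, using $\max_{1\le k\le N}\|[u^k \quad v^k]\|\le\sqrt{|T|}\,\mu_M$ to obtain the stated bound. The only deviations are trivial constant bookkeeping (you absorb the symmetrization factor $2$ and Rudelson's constant into $C_R$ slightly differently from the paper, which writes $a=C_R/2\cdot\frac{\sqrt{\log|T|}}{\sqrt m}\max_k\|[u^k \quad v^k]\|$).
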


\begin{proof}
First we have
\begin{eqnarray*}
\mathbf{E}Y&=&
\begin{bmatrix}
\mathbf{E}\frac{1}{m}\sum_{k=1}^N \delta_k u^k \otimes u^k-I & E\frac{1}{m}\sum_{k=1}^N \delta_k u^k \otimes v^k- \mu \langle \Phi_{T_1}, \Psi_{T_2} \rangle\\
\mathbf{E}\frac{1}{m} \sum_{k=1}^N \delta_k v^k \otimes u^k - \mu \langle \Phi_{T_1}, \Psi_{T_2} \rangle   & \mathbf{E}\frac{1}{m} \sum_{k=1}^N \delta_k v^k \otimes v^k-I
\end{bmatrix}\\
&=& \begin{bmatrix}
\frac{1}{m}\sum_{k=1}^N \frac{m}{N} u^k \otimes u^k-I & \frac{1}{m}\sum_{k=1}^N \frac{m}{N} u^k \otimes v^k- \mu \langle \Phi_{T_1}, \Psi_{T_2} \rangle\\
\frac{1}{m}\sum_{k=1}^N \frac{m}{N} v^k \otimes u^k - \mu \langle \Phi_{T_1}, \Psi_{T_2} \rangle   & \frac{1}{m}\sum_{k=1}^N \frac{m}{N} v^k \otimes v^k-I
\end{bmatrix}\\
&=& \begin{bmatrix}
\frac{1}{N}\sum_{k=1}^N  u^k \otimes u^k-I & \frac{1}{N}\sum_{k=1}^N  u^k \otimes v^k- \mu \langle \Phi_{T_1}, \Psi_{T_2} \rangle\\
\frac{1}{N}\sum_{k=1}^N  v^k \otimes u^k - \mu \langle \Phi_{T_1}, \Psi_{T_2} \rangle   & \frac{1}{N}\sum_{k=1}^N v^k \otimes v^k-I
\end{bmatrix}\\
&=& O.
\end{eqnarray*}

Use the symmetrization technique in [44],
and let $\delta'_1, \cdots, \delta'_n$ are independent copies of $\delta_1, \cdots, \delta_n$,
and  $\epsilon_1, \cdots, \epsilon_n$ be a sequence of Bernoulli variables taking values $\pm1$ with probability $\frac{1}{2}$ (and independent of sequences $\delta$ and $\delta'$),
we have
\begin{eqnarray*}
\mathbf{E}\|Y\|
&\leq& \mathbf{E}_{\delta, \delta'} \|\frac{1}{m} \sum_{k=1}^N (\delta_k-\delta'_k) [u^k \quad v^k]  \otimes [u^k \quad v^k] \|\\
&=& E_{\epsilon} \mathbf{E}_{\delta, \delta'} \|\frac{1}{m} \sum_{k=1}^N \epsilon_k (\delta_k-\delta'_k) [u^k \quad v^k]  \otimes [u^k \quad v^k] \|\\
& \leq & 2 \mathbf{E}_{\epsilon} \mathbf{E}_{\delta} \|\frac{1}{m} \sum_{k=1}^N \epsilon_k  \delta_k [u^k \quad v^k]  \otimes [u^k \quad v^k] \|,
\end{eqnarray*}
the first equality follows from the symmetry of the random variable
$(\delta_k-\delta'_k) [u^k \quad v^k] \otimes [u^k \quad v^k]$
while the last inequality follows from the triangle inequality.

Rudelson's lemma [50] states that
\[
\mathbf{E}_{\epsilon} \|\sum_{k=1}^N \epsilon_k  \delta_k [u^k \quad v^k]  \otimes [u^k \quad v^k] \|
\leq \frac{C_R}{4}\cdot \sqrt{\log{|T|}} \cdot \max_{k: \delta_k=1} \|[u^k \quad v^k] \|
\cdot \sqrt{\|\sum_{k=1}^N  \delta_k [u^k \quad v^k] \otimes [u^k \quad v^k] \|},
\]
for some universal constant $C_R>0$.

Taking expectation over $\delta$ then gives
\begin{eqnarray*}
 \mathbf{E}\|Y\|
& \leq &  C_R/2\cdot \frac{\sqrt{\log{|T|}}}{m} \cdot \max_{1 \leq k \leq N} \|[u^k \quad v^k] \| \cdot \mathbf{E}\sqrt{\|\sum_{k=1}^N  \delta_k [u^k \quad v^k]  \otimes [u^k \quad v^k] \|}\\
& \leq &  C_R/2\cdot \frac{\sqrt{\log{|T|}}}{m} \cdot \max_{1 \leq k \leq N} \|[u^k \quad v^k] \| \cdot \sqrt{\mathbf{E}\|\sum_{k=1}^N  \delta_k [u^k \quad v^k]  \otimes [u^k \quad v^k] \|}.\\
\end{eqnarray*}

Observe that
$$
\mathbf{E}\|\sum_{k=1}^N  \delta_k [u^k \quad v^k] \otimes [u^k \quad v^k] \|
=\mathbf{E}\|mY+mF\|
\leq m(\mathbf{E}\|Y\|+\|F\|).
$$
Therefore
$$
\mathbf{E} \|Y\| \leq a \sqrt{\mathbf{E}\|Y\|+\|F\|},
$$
where $a=C_R/2\cdot \frac{\sqrt{\log{|T|}}}{\sqrt{m}} \cdot \max_{1 \leq k \leq N} \|[u^k \quad v^k]\|$,
which implies
$$
0 \leq \mathbf{E}\|Y\| \leq \frac{a^2+\sqrt{a^4+4a^2\|F\|}}{2}.
$$
It then follows that if $a \leq 1$,
$$\mathbf{E}\|Y\| \leq \sqrt{1+3\|F\|} a.$$
The inequality is based on the fact: when $a \leq 1$,
\begin{eqnarray*}
(a^2+\sqrt{a^4+4a^2\|F\|})^2
&=& 2a^4+2a^3\sqrt{a^2+4\|F\|}+4a^2\|F\|\\
&\leq & 2a^2+2a^2\sqrt{1+4\|F\|}+4a^2\|F\|\\
&\leq & (4+12\|F\|)a^2.
\end{eqnarray*}
Since
$$
\max_{1 \leq k \leq N} \|[u^k \quad v^k]\| \leq \sqrt{|T_1| \mu_{\Phi}^2+|T_2| \mu_{\Psi}^2}
\leq \sqrt{|T_1| +|T_2|} \max \{\mu_{\Phi}, \mu_{\Psi}\}
=\sqrt{|T|} \mu_M,
$$
then
$$
a \leq C_R/2\cdot \frac{\sqrt{\log{|T|}}}{\sqrt{m}} \cdot \sqrt{|T|} \mu_M,
$$
which completes the proof.
\end{proof}

\begin{lemma} \label{l}
Let $[\Phi \quad \Psi]$, $T$ and $\Omega$ be as defined in Theorem \ref{Th7}.
Then for all $t \geq 0$,
$$
P\{|\|Y\|-\mathbf{E}\|Y\||>t\} \leq 3 \exp \left [-\frac{tm}{K|T|\mu_M^2} \log(1+\frac{t}{2+ \mathbf{E}\|Y\|})  \right ].
$$
where $K$ is a numerical constant.
\end{lemma}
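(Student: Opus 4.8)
\noindent
The plan is to realize $\|Y\|$ as the supremum of a sum of independent, centered real random variables and then invoke a Talagrand-type concentration inequality, exactly as in [11]. From (\ref{Y}) we have $Y=\frac{1}{m}\sum_{k=1}^N(\delta_k-\frac{m}{N})W_k$ with $W_k=[u^k\ v^k]\otimes[u^k\ v^k]$, and $\mathbf{E}Y=O$ was already shown in the proof of Lemma \ref{Lemma5}. Setting $Y_k:=\frac{1}{m}(\delta_k-\frac{m}{N})W_k$, the matrix $Y=\sum_k Y_k$ is a sum of independent, mean-zero Hermitian matrices. Since $Y$ is Hermitian, $\|Y\|=\sup_{\|x\|=1}|x^{*}Yx|=\sup_{\|x\|=1}\bigl|\sum_k x^{*}Y_k x\bigr|$; restricting $x$ to a countable dense subset of the unit sphere (legitimate because $x\mapsto x^{*}Y_k x$ is continuous) turns $\|Y\|$ into a genuine supremum of sums of independent centered variables, to which the abstract inequality applies.

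The abstract tool I would use is the following form of Talagrand's inequality: if $Z=\sup_{f}|\sum_k f(Y_k)|$ over a countable family with $\mathbf{E}f(Y_k)=0$, $|f(Y_k)|\le B$, and $\sigma^2=\sup_f\sum_k\mathbf{E}f(Y_k)^2$, then
$$
P(|Z-\mathbf{E}Z|>t)\le 3\exp\left(-\frac{t}{KB}\log\Bigl(1+\frac{Bt}{\sigma^2+B\mathbf{E}Z}\Bigr)\right).
$$
It therefore remains only to supply admissible values of $B$ and $\sigma^2$ for the family $f_x(Y_k)=x^{*}Y_k x$, $\|x\|=1$.

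For the uniform bound, $|f_x(Y_k)|\le\|Y_k\|\le\frac{1}{m}\bigl|\delta_k-\frac{m}{N}\bigr|\,\|W_k\|\le\frac{1}{m}\|[u^k\ v^k]\|^2\le\frac{|T|\mu_M^2}{m}=:B$, using $\|[u^k\ v^k]\|\le\sqrt{|T|}\mu_M$ from the proof of Lemma \ref{Lemma5}; this produces the prefactor $\frac{tm}{K|T|\mu_M^2}$. For the weak variance, writing $w_k:=[u^k\ v^k]$ so that $x^{*}W_k x=|w_k x|^2$, I compute $\sum_k\mathbf{E}(x^{*}Y_k x)^2=\frac{1}{m^2}\frac{m}{N}(1-\frac{m}{N})\sum_k|w_k x|^4\le\frac{1}{mN}\sum_k|w_k x|^4$, and then factor the quartic sum as $\sum_k|w_k x|^4\le(\max_k\|w_k\|^2)\sum_k|w_k x|^2\le|T|\mu_M^2\,x^{*}\bigl(\sum_k W_k\bigr)x$. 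The decisive point is that $\sum_k W_k=[\Phi_{T_1}\ \Psi_{T_2}]^{*}[\Phi_{T_1}\ \Psi_{T_2}]=NF$ by the exact orthonormality relation (\ref{ortho}), whence $x^{*}(\sum_k W_k)x\le N\|F\|$ and $\sigma^2\le\frac{|T|\mu_M^2\|F\|}{m}=B\|F\|\le 2B$, the last step because $\|F\|=1+\|\Phi_{T_1}^{*}\Psi_{T_2}/N\|\le 2$. Substituting $B$ and $\sigma^2+B\mathbf{E}Z\le B(2+\mathbf{E}\|Y\|)$ into the abstract inequality and using monotonicity of the logarithm gives exactly the stated bound with $\mathbf{E}Z=\mathbf{E}\|Y\|$.

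The main obstacle is the variance estimate: one must bound $\sup_{\|x\|=1}\sum_k(x^{*}W_k x)^2$ uniformly over the sphere, and the key manoeuvre is to pull out $\max_k\|w_k\|^2$ and recognize the remaining sum $\sum_k|w_k x|^2$ as the quadratic form of the \emph{deterministic} Gram matrix $\sum_k W_k=NF$, whose norm is controlled by (\ref{ortho}). The reduction to a countable index family and the verification of $\|F\|\le 2$ are routine by comparison.
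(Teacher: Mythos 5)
Your proposal is correct and follows essentially the same route as the paper's own proof: the same decomposition $Y=\sum_k Y_k$ with $Y_k=\frac{1}{m}(\delta_k-\frac{m}{N})[u^k\ v^k]\otimes[u^k\ v^k]$, the same Talagrand-type concentration lemma (Lemma \ref{Taka}), the same uniform bound $B\le |T|\mu_M^2/m$, and the same variance bound $\sigma^2\le 2B$ leading to the stated inequality. The only immaterial differences are that you take the supremum of quadratic forms $x^*Yx$ over unit vectors (valid since $Y$ is Hermitian) where the paper uses bilinear forms $\langle f_1, Y f_2\rangle$ over pairs, and that you reach $\sigma^2\le 2B$ via $\sum_k [u^k\ v^k]\otimes[u^k\ v^k]=NF$ together with $\|F\|\le 2$, whereas the paper verifies the equivalent estimate (\ref{2N}) by expanding and invoking the column orthonormality in (\ref{ortho}).
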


The proof of Lemma \ref{l} uses a concentration equality about the large deviation of suprema of sums of
independent variables [47], which is stated below.

\begin{lemma} \label{Taka}
Let $Y_1, Y_2, \cdots, Y_n$ be a sequence of independent random variables taking values in a Banach space
and let $Z$ be the supremum defined as
$$Z=\sup_{f \in \mathcal{F}}\sum_{i=1}^n f(Y_i),$$
where $\mathcal{F}$ is a countable family of real-valued functions.
Assume that $|f| \leq B$ for every $f$ in $\mathcal{F}$,
and  $\mathbf{E}f(Y_i)=0$ for every $f$ in $\mathcal{F}$ and $i= 1,2,\cdots, n$.
Then for all $t \geq 0$,
$$
P(|Z-\mathbf{E}Z|>t) \leq 3 \exp \left [-\frac{t}{KB} \log(1+\frac{Bt}{\sigma^2+B \mathbf{E}\bar{Z}})  \right ],
$$
where $\sigma^2=\sup_{f \in \mathcal{F}} \sum_{k=1}^n \mathbf{E} f^2(Y_k)$,
$\bar{Z}=\sup_{f \in \mathcal{F}}|\sum_{i=1}^n f(Y_i)|$,
and $K$ is a numerical constant.
\end{lemma}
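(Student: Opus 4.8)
This is a Bennett--Bernstein type concentration inequality for the supremum of sums of independent random variables, i.e. a form of Talagrand's concentration inequality for empirical processes. The route I would take is the entropy (Herbst) method, which delivers such two-sided bounds in a reasonably self-contained way. The plan is to control the logarithmic moment generating function $\psi(\lambda)=\log \mathbf{E}\, e^{\lambda(Z-\mathbf{E}Z)}$, convert the resulting Bennett-type estimate into a tail bound by Chernoff's inequality, and optimize over $\lambda$. Working with a countable family $\mathcal{F}$ is essential here so that all suprema are measurable and so that differentiation under the expectation sign is legitimate.

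First I would reduce to one variable at a time via the tensorization inequality for entropy: writing $\mathrm{Ent}(g)=\mathbf{E}[g\log g]-\mathbf{E}[g]\log \mathbf{E}[g]$ for $g=e^{\lambda Z}\ge 0$, one has
$$
\mathrm{Ent}\bigl(e^{\lambda Z}\bigr)\leq \sum_{i=1}^{n}\mathbf{E}\,\mathrm{Ent}_i\bigl(e^{\lambda Z}\bigr),
$$
where $\mathrm{Ent}_i$ is the entropy taken in $Y_i$ alone with the remaining coordinates frozen. Next I would bound each conditional term. Let $Z^{(i)}=\sup_{f}\sum_{j\ne i} f(Y_j)$ be the supremum obtained by deleting the $i$-th summand; since every $f$ satisfies $|f|\le B$, removing a single term changes the supremum by at most $B$, so $|Z-Z^{(i)}|\le B$, which is exactly the bounded-difference structure required. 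A variational bound for $\mathrm{Ent}_i(e^{\lambda Z})$ in terms of $\mathbf{E}_i[(Z-Z^{(i)})^{2} e^{\lambda Z}]$ then yields, after summation, a differential inequality for $\psi$.

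Solving that inequality produces the Bennett-form bound $\psi(\lambda)\leq \dfrac{v}{B^{2}}\bigl(e^{\lambda B}-\lambda B-1\bigr)$ with variance proxy $v=\sigma^{2}+2B\,\mathbf{E}\bar Z$, where $\sigma^{2}=\sup_{f}\sum_k \mathbf{E}f^{2}(Y_k)$ is precisely the weak variance in the statement and the extra $B\,\mathbf{E}\bar Z$ term is the self-bounding correction forced by the supremum. I would then apply Chernoff's bound $P(Z-\mathbf{E}Z>t)\leq \exp(-\lambda t+\psi(\lambda))$ and minimize over $\lambda>0$; the minimizer is explicit and gives the tail $\exp\!\bigl[-\tfrac{v}{B^{2}}h(Bt v^{-1})\bigr]$ with $h(u)=(1+u)\log(1+u)-u$. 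Using the elementary estimate $h(u)\ge \tfrac{u}{2}\log(1+u)$ and absorbing all numerical factors (including the $2$ in $v$) into $K$ recovers the claimed form $\exp\!\bigl[-\tfrac{t}{KB}\log(1+\tfrac{Bt}{\sigma^{2}+B\mathbf{E}\bar Z})\bigr]$. Running the identical argument for $-Z$, whose comparison function $-Z^{(i)}$ obeys the same bounded-difference property, controls the lower tail; summing the two one-sided bounds and accounting for the looseness of the estimate on $h$ yields the constant $3$.

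The main obstacle is the second step: obtaining a clean bound on $\mathrm{Ent}_i(e^{\lambda Z})$ that simultaneously captures the weak variance $\sigma^{2}$ and the self-bounding correction $B\,\mathbf{E}\bar Z$ with the right numerical behaviour. This requires the duality (variational) formula for entropy together with a careful use of $|Z-Z^{(i)}|\le B$ and a contraction argument that replaces $\sum_i (Z-Z^{(i)})^{2}$ by the per-coordinate contribution $\sum_k \mathbf{E}f^{2}(Y_k)$ evaluated at the supremizing $f$. Controlling the interplay between these two terms is exactly where Talagrand's inequality is delicate, and it is precisely this step that forces the nonsharp constant $K$.
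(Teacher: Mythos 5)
You should first know what the paper itself does with this statement: it does not prove it. Lemma~\ref{Taka} is quoted as a known result --- Talagrand's concentration inequality for suprema of sums of independent random variables (with explicit constants due to Massart and successors) --- imported via the citation [47] and then applied to prove Lemma~\ref{l}. So there is no internal proof to compare yours against, and any complete argument is by definition a different route from the paper's.

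Judged on its own merits, your sketch of the \emph{upper} tail is the standard Ledoux--Massart--Bousquet entropy-method proof and is correct in outline: tensorization of entropy, the bounded-difference bound $|Z-Z^{(i)}|\leq B$, the Bennett-form bound on the log-moment generating function with variance proxy $\sigma^2+2B\,\mathbf{E}\bar Z$, Chernoff, and the elementary estimate $h(u)\geq \tfrac{u}{2}\log(1+u)$; the bookkeeping you describe (absorbing the factor $2$ inside the logarithm into $K$) all works. The genuine gap is the \emph{lower} tail. You assert it follows by ``running the identical argument for $-Z$,'' but the entropy method is not symmetric under $Z\mapsto -Z$. The step that produces the useful variance proxy $\sigma^2+B\,\mathbf{E}\bar Z$ (rather than the useless bounded-difference proxy $nB^2$) rests on the self-bounding inequalities $\hat f^{(i)}(Y_i)\leq Z-Z^{(i)}\leq \hat f(Y_i)$, where $\hat f$ is the (near-)maximizer of the full sum and $\hat f^{(i)}$ of the deleted sum. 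For $-Z$ these reverse direction: the controlling increment becomes $-\hat f^{(i)}(Y_i)$, and although $\hat f^{(i)}$ is independent of $Y_i$, its conditional second moments sum to at most $\sum_i \sup_f \mathbf{E}f^2(Y_i)$, which is \emph{not} bounded by $\sigma^2=\sup_f\sum_i \mathbf{E}f^2(Y_i)$ (the supremum sits inside the sum instead of outside). This is precisely why the left tail of Talagrand's inequality (the Klein--Rio theorem) was established separately from, and considerably later than, the right tail, by a genuinely different argument. Since the lemma is two-sided ($P(|Z-\mathbf{E}Z|>t)$), your proposal is incomplete as it stands: either supply a separate lower-tail argument or do what the paper does and invoke the two-sided result from the literature.
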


\begin{proof}
Since
$
\frac{1}{N}\sum_{k=1}^N  [u^k \quad v^k] \otimes [u^k \quad v^k]= F
$,
then express $Y$ in (\ref{Y}) as
$$
Y=\sum_{k=1}^N (\delta_k- \frac{m}{N}) \frac{[u^k \quad v^k] \otimes [u^k \quad v^k]} {m}:=\sum_{k=1}^N Y_k,
$$
where
$Y_k:=(\delta_k- \frac{m}{N}) \frac{[u^k \quad v^k] \otimes [u^k \quad v^k]} {m}$.

According the definition of spectral norm,
$$
\|Y\|=\sup_{f_1,f_2} \langle f_1, Y f_2 \rangle
=\sup_{f_1,f_2} \sum_{k=1}^N \langle f_1, Y_k f_2 \rangle,
$$
where the supremum is over a countable collection of unit vectors.

Let $f(Y_k)$ denote the mapping $\langle f_1, Y_k f_2 \rangle$,
then we have $\mathbf{E}f(Y_k)=0$ and for all $k$
$$
|f(Y_k)|=|\langle f_1, Y_k f_2 \rangle|
\leq \frac{|\langle f_1, [u^k \quad v^k] \rangle \langle [u^k \quad v^k] ,f_2\rangle|} {m}
\leq \frac{\|[u^k \quad v^k]\|^2} {m}
\leq B,
$$
where $B=\max_{1 \leq k \leq N}  \frac{\|[u^k \quad v^k]\|^2} {m}$.

Hence according to Lemma \ref{Taka},
for all $t \geq 0$,

\begin{equation} \label{Yb}
P(|\|Y\|-\mathbf{E}\|Y\||>t ) \leq 3 \exp \left [-\frac{t}{KB} \log(1+\frac{Bt}{\sigma^2+B \mathbf{E}\|Y\|})  \right ],
\end{equation}
where $\sigma^2=\sup_f \sum_{k=1}^N \mathbf{E} f^2(Y_k)$.

We now compute
\begin{eqnarray*}
\mathbf{E}f^2(Y_k)
&=& \frac{m}{N} (1- \frac{m}{N}) \frac{|\langle f_1, [u^k \quad v^k] \rangle \langle [u^k \quad v^k] ,f_2\rangle|^2} {m^2}\\
&\leq& \frac{m}{N} (1- \frac{m}{N}) \frac{\|[u^k \quad v^k]\|^2} {m^2} |\langle [u^k \quad v^k] ,f_2\rangle|^2,
\end{eqnarray*}
the inequality is based on Cauchy inequality, and

\begin{eqnarray*}
\sum_{k=1}^N \mathbf{E}f^2(Y_k)
&\leq& \sum_{k=1}^N \frac{m}{N} (1- \frac{m}{N}) \frac{\|[u^k \quad v^k]\|^2} {m^2} |\langle [u^k \quad v^k] ,f_2\rangle|^2\\
&\leq&  \frac{1}{N} (1- \frac{m}{N}) \max_{1\leq k\leq N}\frac{\|[u^k \quad v^k]\|^2} {m} \sum_{k=1}^N|\langle [u^k \quad v^k] ,f_2\rangle|^2\\
&\leq&  \frac{1}{N} (1- \frac{m}{N}) B \sum_{k=1}^N|\langle [u^k \quad v^k] ,f_2\rangle|^2\\
&\leq & 2B,
\end{eqnarray*}
the last inequality is based on
\begin{eqnarray}
&& \sum_{k=1}^N|\langle [u^k \quad v^k] ,f_2\rangle|^2 \nonumber\\
&=& \sum_{k=1}^N (\Phi_{k1}f_{2, 1} +\cdots + \Phi_{kT_1}f_{2, T_1} + \Psi_{k1}f_{2, T_1+1} +\cdots + \Phi_{kT_2}f_{2, T_1+T_2})^2 \nonumber\\
&\leq & 2 \sum_{k=1}^N (\Phi_{k1}f_{2, 1} +\cdots + \Phi_{kT_1}f_{2, T_1})^2+ 2 \sum_{k=1}^N (\Psi_{k1}f_{2, T_1+1} +\cdots + \Phi_{kT_2}f_{2, T_1+T_2})^2  \nonumber \\
&= & 2 \sum_{k=1}^N [(\Phi_{k1}f_{2, 1})^2 +\cdots + (\Phi_{kT_1}f_{2, T_1})^2]+ 2 \sum_{k=1}^N [(\Psi_{k1}f_{2, T_1+1})^2 +\cdots + (\Psi_{kT_2}f_{2, T_1+T_2})^2] \nonumber \\
&=& 2 N(f_{2, 1}^2 +\cdots + f_{2, T_1}^2)+ 2 N (f_{2, T_1+1}^2 +\cdots + f_{2, T_1+T_2}^2) \nonumber \\
&\leq & 2N, \label{2N}
\end{eqnarray}
where $f_{2, j}$ is the $j$-th element of $f_2$.
And the second and third equalities are based on (\ref{ortho}).

Then $\sigma^2=2B$. So equation (\ref{Yb}) can be rewritten as
$$
P(|\|Y\|-\mathbf{E}\|Y\||>t) \leq 3 \exp \left [-\frac{t}{KB} \log(1+\frac{t}{2+ \mathbf{E}\|Y\|})  \right ].
$$

Since $B \leq \frac{|T|\mu_M^2}{m}$,
which implies
$$
P(|\|Y\|-\mathbf{E}\|Y\||>t) \leq 3 \exp \left [-\frac{tm}{K|T|\mu_M^2} \log(1+\frac{t}{2+ \mathbf{E}\|Y\|})  \right ].
$$
\end{proof}

\noindent
\textit{\textbf{Proof of Theorem \ref{Th7}.}}
By Lemma \ref{Lemma5},
taking $m \geq 4 C_R^2 (1+3\|F\|)\mu_M^2 |T|\log|T|$ so that $\mathbf{E}\|Y\| \leq \frac{1}{4}$.
And pick $t=\frac{1}{4}$,
then by Lemma 4
$$
P(\|Y\|> 1/2) \leq 3 \exp{\left (-\frac{m}{C_T |T| \mu_M^2}\right ) },
$$
where $C_T=4K/\log{(10/9)}$.

Taking $m\geq C_T |T| \mu_M^2 \log {(\frac{3}{\delta})}$ so that
$\exp{\left (-\frac{m}{C_T |T| \mu_M^2}\right ) }<\delta$,
which complete the proof of the first result (1) in Theorem \ref{Th7}.

For the result (2), we have
\begin{eqnarray*}
&&P(\|\frac{1}{m} [\Phi_{\Omega T_1} \quad \Psi_{\Omega T_2}]^* [\Phi_{\Omega T_1} \quad \Psi_{\Omega T_2}]
- I\|> 1/2 + \|F-I\| ) \\
&=& P(\|\frac{1}{m} [\Phi_{\Omega T_1} \quad \Psi_{\Omega T_2}]^* [\Phi_{\Omega T_1} \quad \Psi_{\Omega T_2}]
- F +F -I\|> 1/2 + \|F-I\| ) \\
& \leq & P(\|\frac{1}{m} [\Phi_{\Omega T_1} \quad \Psi_{\Omega T_2}]^* [\Phi_{\Omega T_1} \quad \Psi_{\Omega T_2}]
- F\| +\|F -I\|> 1/2 + \|F-I\| ) \\
&= & P(\|\frac{1}{m} [\Phi_{\Omega T_1} \quad \Psi_{\Omega T_2}]^* [\Phi_{\Omega T_1} \quad \Psi_{\Omega T_2}]
- F\| > 1/2  ) \\
&\leq & \delta.
\end{eqnarray*}

By simple calculation, we have
$$
\|F-I\|=\|\Phi_{T_1}^*\Psi_{T_2} /N\|,
$$
which completes the proof.
$\hfill{} \Box$

\section{Proof of Theorem \ref{Th8}}
{\hspace{5mm}
In the sequel, we will show that  $\pi(t) <1$ for $t \in T^c$ with high probability.

For a particular $t_0 \in T^c$, rewrite
$$
\pi(t_0)
=\langle v^0,[[\Phi_{\Omega T_1} \quad \Psi_{\Omega T_2}]^* [\Phi_{\Omega T_1} \quad \Psi_{\Omega T_2}]]^{-1} \tau \rangle
=\langle w^0, z\rangle,
$$
where $v^0$ is the $t_0$-th row vector of $[\Phi_\Omega \quad \Psi_\Omega]^*[\Phi_{\Omega T_1} \quad \Psi_{\Omega T_2}]$ and
$w^0=[[\Phi_{\Omega T_1} \quad \Psi_{\Omega T_2}]^*$ $[\Phi_{\Omega T_1}$ \quad $\Psi_{\Omega T_2}]]^{-1} v^0$.

Set $\lambda_k^0=[\Phi \quad \Psi]_{k t_0}$.
The vector $v^0$ is given by
$$
v^0=\sum_{k=1}^N \delta_k \lambda_k^0 [u^k \quad v^k].
$$

Let
$$
\tilde{v}^0=v^0-\mathbf{E}v^0=\sum_{k=1}^N (\delta_k- \mathbf{E} \delta_k) \lambda_k^0 [u^k \quad v^k].
$$

The following lemmas give estimates for the size of these vectors.

\begin{lemma} \label{Lemma8}
The second moment of $\|\tilde{v}^0\|$ and $\|v^0\|$ obeys
$$
\mathbf{E} \|\tilde{v}^0\|^2
\leq  m |T|\mu_M^2
$$
and
$$
\mathbf{E} \|v^0\|^2
\leq 2m |T| \mu_M^2 + 2 m^2 \mu^2 \max\{|T_1|,|T_2|\},
$$
respectively.
\end{lemma}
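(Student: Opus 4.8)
The plan is to treat the two bounds separately, exploiting the independence of the Bernoulli selectors $\{\delta_k\}$ for the centered vector $\tilde v^0$, and then reducing the second moment of $v^0$ to this centered quantity plus the norm of its mean via the elementary inequality $\|a+b\|^2\le 2\|a\|^2+2\|b\|^2$.

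First I would bound $\mathbf{E}\|\tilde v^0\|^2$. Writing $\tilde v^0=\sum_{k=1}^N(\delta_k-\tfrac{m}{N})\lambda_k^0[u^k\quad v^k]$ and expanding the squared norm, the cross terms $k\neq j$ carry the factor $\mathbf{E}[(\delta_k-\tfrac{m}{N})(\delta_j-\tfrac{m}{N})]$, which vanishes by independence since each factor is centered. Only the diagonal survives, leaving
$$\mathbf{E}\|\tilde v^0\|^2=\frac{m}{N}\Bigl(1-\frac{m}{N}\Bigr)\sum_{k=1}^N|\lambda_k^0|^2\,\|[u^k\quad v^k]\|^2.$$
Two facts then close the estimate: from Lemma \ref{Lemma5} one has $\|[u^k\quad v^k]\|^2\le|T|\mu_M^2$ for every $k$, and from the normalization in (\ref{ortho}) the $t_0$-th column of $[\Phi\quad\Psi]$ has squared Euclidean norm equal to $N$, i.e. $\sum_{k=1}^N|\lambda_k^0|^2=N$. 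Since $\tfrac{m}{N}(1-\tfrac{m}{N})\le\tfrac{m}{N}$, these give $\mathbf{E}\|\tilde v^0\|^2\le m|T|\mu_M^2$.

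For the second bound I would use the splitting $v^0=\tilde v^0+\mathbf{E}v^0$, so that $\mathbf{E}\|v^0\|^2\le 2\,\mathbf{E}\|\tilde v^0\|^2+2\|\mathbf{E}v^0\|^2\le 2m|T|\mu_M^2+2\|\mathbf{E}v^0\|^2$. It then remains to control $\mathbf{E}v^0=\frac{m}{N}\sum_{k=1}^N\lambda_k^0[u^k\quad v^k]$, whose entries are exactly the inner products between the $t_0$-th column of $[\Phi\quad\Psi]$ and the columns of $[\Phi_{T_1}\quad\Psi_{T_2}]$.

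The main obstacle, and the crux of the argument, is recognizing how the structure (\ref{ortho}) forces most of these entries to vanish. Since $t_0\in T^c$, I would split into two cases according to whether $t_0$ indexes a column of $\Phi$ or of $\Psi$. If $t_0$ indexes a $\Phi$-column, then every entry landing in the $T_1$ block equals $\langle\Phi_{t_0},\Phi_j\rangle=N\delta_{t_0,j}=0$ (distinct columns of $\Phi$, using $\Phi^*\Phi=NI$), so only the $|T_2|$ entries in the $T_2$ block survive, each a cross term $\langle\Phi_{t_0},\Psi_j\rangle$ of modulus at most $N\mu$ by the definition of mutual coherence and the column norm $\sqrt N$. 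The symmetric case with $t_0$ a $\Psi$-column leaves only the $|T_1|$ entries, again of size at most $N\mu$. In either case $\|\sum_k\lambda_k^0[u^k\quad v^k]\|^2\le N^2\mu^2\max\{|T_1|,|T_2|\}$, whence $\|\mathbf{E}v^0\|^2\le m^2\mu^2\max\{|T_1|,|T_2|\}$. Combining, $\mathbf{E}\|v^0\|^2\le 2m|T|\mu_M^2+2m^2\mu^2\max\{|T_1|,|T_2|\}$, as claimed.
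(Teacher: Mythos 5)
Your proposal is correct and follows essentially the same route as the paper's proof: centering $v^0$, using independence of the $\delta_k$ to kill the cross terms and reduce the first bound to $\frac{m}{N}(1-\frac{m}{N})\sum_k|\lambda_k^0|^2\|[u^k \quad v^k]\|^2$ with $\sum_k|\lambda_k^0|^2=N$, then splitting $v^0=\tilde v^0+\mathbf{E}v^0$ and bounding the mean term by the case analysis on whether $t_0$ indexes a $\Phi$- or $\Psi$-column, where within-basis orthogonality zeroes one block and the coherence $\mu$ bounds the other. Your phrasing of the case split is in fact slightly more careful than the paper's "$t_0\in T_1^c$" wording, but the argument is the same.
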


\begin{proof}
$\tilde{v}^0$ can be viewed as a sum of independent random variables:
$$
\tilde{v}^0=\sum_{k=1}^N Y_k, \quad Y_k=(\delta_k- \mathbf{E} \delta_k) \lambda_k^0 [u^k \quad v^k],
$$
where $\mathbf{E} Y_k=0$.
It follows that
\begin{eqnarray*}
\mathbf{E} \|\tilde{v}^0\|^2
&=&\sum_{k}\mathbf{E}\langle Y_k,Y_k \rangle + \sum_{k'\neq k}\mathbf{E}\langle Y_k,Y_k' \rangle \\
&=&\sum_{k}\mathbf{E}\langle Y_k,Y_k \rangle\\
&=&\sum_{k} \frac{m} {N} (1- \frac{m} {N})|\lambda_k^0|^2  \|[u^k \quad v^k]\|^2\\
&\leq&\frac{m} {N} (1- \frac{m} {N}) (|T_1|\mu_\Phi^2+|T_2|\mu_\Psi^2)\sum_{k}|\lambda_k^0|^2\\
&=& m (1- \frac{m} {N}) (|T_1|\mu_\Phi^2+|T_2|\mu_\Psi^2)\\
&\leq& m |T|\mu_M^2,
\end{eqnarray*}
the last equality is based on $\sum_{k}|\lambda_k^0|^2=N$.
Hence
\begin{eqnarray*}
\mathbf{E} \|v^0\|^2
&=& \mathbf{E} \| \tilde{v}^0 +\sum_{k=1}^N \mathbf{E} \delta \lambda_k^0 [u^k \quad v^k]\|^2\\
&=& \mathbf{E} \| \tilde{v}^0 + \frac{m} {N} \sum_{k=1}^N \lambda_k^0 [u^k \quad v^k]\|^2\\
&\leq& \mathbf{E} 2 (\| \tilde{v}^0 \|^2+ \|\frac{m} {N} \sum_{k=1}^N \lambda_k^0 [u^k \quad v^k]\|^2)\\
\end{eqnarray*}

Notice that
if $t_0 \in T_1^c$, then
$$
\sum_{k=1}^N \lambda_k^0 [u^k \quad v^k]
=\Phi_{t_0}^*[\Phi_{T_1} \quad \Psi_{T_2}]
=[\overrightarrow{0} \quad \sum_{k=1}^N \lambda_k^0 v^k],
$$
which implies that
$$
\| \frac{m}{N}\sum_{k=1}^N \lambda_k^0 [u^k \quad v^k]\|
=\|\frac{m}{N} \sum_{k=1}^N \lambda_k^0 v^k\|
=m \| \frac{\langle\Phi_{t_0}, \Psi_{T_2} \rangle}{N}\|
\leq m \mu \sqrt{|T_2|}.
$$
Similarly, if $t_0 \in T_2^c$, then
$$
\| \frac{m}{N} \sum_{k=1}^N \lambda_k^0 [u^k \quad v^k]\|
\leq  m \mu \sqrt{|T_1|}.
$$

In summary,
if $t_0 \in T^c$, then
\begin{equation} \label{epc}
\| \frac{m}{N} \sum_{k=1}^N \lambda_k^0 [u^k \quad v^k]\|
\leq  m \mu \max\{\sqrt{|T_1|},\sqrt{|T_2|}\}.
\end{equation}
\noindent
Hence,
$$
\mathbf{E} \|v^0\|^2
\leq 2m |T| \mu_M^2 + 2 m^2 \mu^2 \max\{|T_1|,|T_2|\}.
$$
\end{proof}

\begin{lemma} \label{Lemma9}
Fix $t_0 \in T^c$.
Define $\bar{\sigma}$ as
$$
\bar{\sigma}^2=\max\{ 2m \mu_M^2 , \sqrt{m}|T| \mu_M^3\}= m \mu_M^2 \max\{2 , \frac{|T| \mu_M}{\sqrt{m}}\}.
$$
For $a>0$ obeying $a \leq \frac{ \sqrt{2 m}}{\sqrt{|T|}\mu_M}$ if $\frac{|T| \mu_M}{\sqrt{m}} \leq 2$
and $a \leq ({\frac{m}{\mu_M^2}})^{1/4}$ otherwise.
Then
$$
P(\|v^0\| >a\bar{\sigma}+ m \mu \sqrt{|T|} + \sqrt{m |T|} \mu_M) \leq 3 e^{-\gamma a^2},
$$
for some positive constant $\gamma >0$.
\end{lemma}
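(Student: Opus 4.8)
The plan is to control $\|v^0\|$ by separating its mean from its fluctuation. Writing $v^0 = \mathbf{E}v^0 + \tilde v^0$ with $\tilde v^0 = \sum_{k=1}^N(\delta_k - \mathbf{E}\delta_k)\lambda_k^0[u^k\ v^k]$, the triangle inequality gives $\|v^0\| \le \|\mathbf{E}v^0\| + \|\tilde v^0\|$. The deterministic term is already handled: since $t_0 \in T^c$, estimate (\ref{epc}) from the proof of Lemma \ref{Lemma8} yields $\|\mathbf{E}v^0\| = \|\frac{m}{N}\sum_k \lambda_k^0[u^k\ v^k]\| \le m\mu\max\{\sqrt{|T_1|},\sqrt{|T_2|}\} \le m\mu\sqrt{|T|}$, which is the middle term of the claimed bound. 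Moreover, by Jensen's inequality and Lemma \ref{Lemma8}, $\mathbf{E}\|\tilde v^0\| \le \sqrt{\mathbf{E}\|\tilde v^0\|^2} \le \sqrt{m|T|}\mu_M$, which is the last term. Thus it suffices to prove that $\|\tilde v^0\|$ exceeds its mean by more than $a\bar\sigma$ only on an event of probability at most $3e^{-\gamma a^2}$.

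To this end I would realize $\|\tilde v^0\| = \sup_f \sum_{k=1}^N f(Y_k)$, where $f$ ranges over a countable dense family of unit vectors, $Y_k = (\delta_k - \mathbf{E}\delta_k)\lambda_k^0[u^k\ v^k]$, and $f(Y_k) = \langle f, Y_k\rangle$ satisfies $\mathbf{E}f(Y_k)=0$. This is exactly the form required by Talagrand's inequality (Lemma \ref{Taka}), so the two ingredients to compute are the uniform bound $B$ and the variance parameter $\sigma^2$. Using $|\lambda_k^0| = |[\Phi\ \Psi]_{k t_0}| \le \mu_M$ and the column estimate $\|[u^k\ v^k]\| \le \sqrt{|T|}\mu_M$ already established in Lemma \ref{Lemma5}, one gets $|f(Y_k)| \le |\lambda_k^0|\,\|[u^k\ v^k]\| \le \sqrt{|T|}\mu_M^2 =: B$. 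For the variance, $\mathbf{E}f^2(Y_k) = \frac{m}{N}(1-\frac{m}{N})|\lambda_k^0|^2|\langle f,[u^k\ v^k]\rangle|^2$, and summing over $k$ with $\max_k|\lambda_k^0|^2 \le \mu_M^2$ together with $\sum_k|\langle f,[u^k\ v^k]\rangle|^2 \le 2N$ from (\ref{2N}) gives $\sigma^2 \le 2m\mu_M^2$. The crucial bookkeeping step is then $\sigma^2 + B\,\mathbf{E}\|\tilde v^0\| \le 2m\mu_M^2 + \sqrt{m}|T|\mu_M^3 \le 2\max\{2m\mu_M^2,\sqrt{m}|T|\mu_M^3\} = 2\bar\sigma^2$; this is precisely why $\bar\sigma^2$ is defined as that maximum.

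Feeding $B$ and the denominator bound $2\bar\sigma^2$ into Lemma \ref{Taka} with $t = a\bar\sigma$ gives $P(\|\tilde v^0\| - \mathbf{E}\|\tilde v^0\| > a\bar\sigma) \le 3\exp[-\frac{a\bar\sigma}{KB}\log(1 + \frac{B a}{2\bar\sigma})]$, where I have enlarged the denominator from $\sigma^2 + B\mathbf{E}\|\tilde v^0\|$ to $2\bar\sigma^2$, which only weakens the bound. The two hypotheses on $a$ are engineered exactly so that the logarithm's argument stays bounded: in the regime $\frac{|T|\mu_M}{\sqrt m}\le 2$ one has $\bar\sigma = \sqrt{2m}\mu_M$ and the constraint $a \le \frac{\sqrt{2m}}{\sqrt{|T|}\mu_M}$ forces $\frac{Ba}{2\bar\sigma} \le \frac{1}{2}$, while in the regime $\frac{|T|\mu_M}{\sqrt m} > 2$ one has $\bar\sigma = m^{1/4}\sqrt{|T|}\mu_M^{3/2}$ and the constraint $a \le (m/\mu_M^2)^{1/4}$ again forces $\frac{Ba}{2\bar\sigma}\le\frac{1}{2}$. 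Since $\log(1+x)\ge c\,x$ on $[0,\frac{1}{2}]$ for a numerical $c>0$, the exponent is at most $-\frac{a\bar\sigma}{KB}\cdot c\,\frac{Ba}{2\bar\sigma} = -\frac{c}{2K}a^2 =: -\gamma a^2$. Combining with the first paragraph, off this event $\|v^0\| \le a\bar\sigma + \sqrt{m|T|}\mu_M + m\mu\sqrt{|T|}$, which is the assertion.

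The main obstacle I anticipate is the last step: the passage from the Talagrand-type tail, with its $\log(1+\cdot)$ factor, to a clean sub-Gaussian $e^{-\gamma a^2}$ bound, together with the verification that the two piecewise constraints on $a$ are exactly the thresholds below which the logarithm can be linearized. The computations of $B$ and $\sigma^2$ are routine given (\ref{ortho}), (\ref{2N}) and the column-norm estimate, and the decomposition into mean plus fluctuation is elementary; the real content is matching the scaling of $a$ so as to keep the argument inside the sub-Gaussian regime in both cases.
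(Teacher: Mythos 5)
Your proposal is correct and follows essentially the same route as the paper's own proof: the same decomposition $v^0=\mathbf{E}v^0+\tilde v^0$ with the mean controlled by (\ref{epc}) and $\mathbf{E}\|\tilde v^0\|$ by Jensen and Lemma \ref{Lemma8}, the same application of Lemma \ref{Taka} with $B=\sqrt{|T|}\mu_M^2$ and $\sigma^2\leq 2m\mu_M^2$ via (\ref{2N}), and the same observation that the two piecewise constraints on $a$ are exactly the condition $Ba\bar\sigma\leq\bar\sigma^2$ keeping the logarithm in its linearizable regime. Your explicit statement of $\log(1+x)\geq cx$ on $[0,\tfrac{1}{2}]$ merely makes precise what the paper calls ``Tailor's expansion of logarithm function,'' so there is no substantive difference.
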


\begin{proof}
By definition, $\|\tilde{v}^0\|$ is given by
$$
\|\tilde{v}^0\|
=\sup_{\|f\|=1}\langle \tilde{v}^0, f \rangle
=\sup_{\|f\|=1}\sum_{k=1}^N \langle Y_k, f \rangle.
$$

For a fixed unit vector $f$, let $f(Y_k)$ denote the mapping $\langle Y_k, f \rangle$.
Since $\mathbf{E} Y_k=0$, then $\mathbf{E} f(Y_k)=0$. And
\begin{eqnarray*}
|f(Y_k)|
&\leq& |\lambda_k^0||\langle f, [u^k \quad v^k]\rangle|\\
&\leq& |\lambda_k^0|\|[u^k \quad v^k]\|\\
&\leq& \max\{\mu_{\Phi}, \mu_{\Psi}\}\cdot \sqrt{|T_1| \mu_{\Phi}^2+|T_2| \mu_{\Psi}^2}\\
&\leq& \sqrt{|T|} \mu_M^2=B',
\end{eqnarray*}
for all $k$.

By applying (\ref{2N}), we have
$$
\sum_{k=1}^N \mathbf{E} f^2(Y_k)
=\sum_{k=1}^N\frac{m}{N}(1-\frac{m}{N})|\lambda_k^0|^2|\langle[u^k \quad v^k], f \rangle|^2
\leq \frac{m}{N}(1-\frac{m}{N})\mu_M^2 \cdot 2N
\leq 2m \mu_M^2,
$$
which implies $\sigma^2=2m \mu_M^2$.

According to Lemma \ref{Taka}, we have
$$
P(|\|\tilde{v}^0\|-\mathbf{E}\|\tilde{v}^0\|| >t)
\leq 3 \exp \left [-\frac{t}{KB'} \log(1+\frac{B't}{\sigma^2+B' \mathbf{E}\|\tilde{v}^0\|})  \right ],
$$
which implies
$$
P(\|\tilde{v}^0\| >t + \mathbf{E}\|\tilde{v}^0\|)
\leq 3 \exp \left [-\frac{t}{KB'} \log(1+\frac{B't}{\sigma^2+B' \mathbf{E}\|\tilde{v}^0\|})  \right ].
$$

For the $\mathbf{E}\|\tilde{v}^0\|$, we simply use
$$
\mathbf{E}\|\tilde{v}^0\|
\leq \sqrt{\mathbf{E}\|\tilde{v}^0\|^2}
\leq \mu_M \sqrt{m |T|}.
$$

Since
$
v^0=\tilde{v}^0 +\sum_{k=1}^N \mathbf{E} \delta \lambda_k^0 [u^k \quad v^k]
=\tilde{v}^0 + \frac{m}{N} \sum_{k=1}^N \lambda_k^0 [u^k \quad v^k],
$
then
\begin{eqnarray*}
&&P(\|v^0\| >t+ \frac{m}{N} \| \sum_{k=1}^N \lambda_k^0 [u^k \quad v^k]\| +\mathbf{E}\|\tilde{v}^0\| )\\
&=&P(\|\tilde{v}^0 +\frac{m}{N}\sum_{k=1}^N \lambda_k^0 [u^k \quad v^k]\|> t+ \frac{m}{N} \|\sum_{k=1}^N \lambda_k^0 [u^k \quad v^k]\| +\mathbf{E}\|\tilde{v}^0\|)\\
&\leq& P(\|\tilde{v}^0\| +\frac{m}{N}\|\sum_{k=1}^N \lambda_k^0 [u^k \quad v^k]\|>t+ \frac{m}{N} \|\sum_{k=1}^N \lambda_k^0 [u^k \quad v^k]\| +\mathbf{E}\|\tilde{v}^0\|)\\
&=&P(\|\tilde{v}^0\|>t +\mathbf{E}\|\tilde{v}^0\|)\\
& \leq &  3 \exp \left [-\frac{t}{KB'} \log(1+\frac{B't}{\sigma^2+B' \mathbf{E}\|\tilde{v}^0\|})  \right ]\\
& \leq &  3 \exp \left [-\frac{t}{KB'} \log \left (1+\frac{B't}{2m \mu_M^2 + \sqrt{m}|T| \mu_M^3}\right)  \right ].\\
\end{eqnarray*}

Suppose that $\bar{\sigma}^2=\max\{2m \mu_M^2 , \sqrt{m}|T| \mu_M^3\}= m \mu_M^2 \max\{ 2 , \frac{|T| \mu_M}{\sqrt{m}}\}$.
And fix $t=a \bar{\sigma}$,
then by using Tailor's expansion of logarithm function, we have
$$
P(\|v^0\| >t+ \frac{m}{N} \| \sum_{k=1}^N \lambda_k^0 [u^k \quad v^k]\| +\mathbf{E}\|\tilde{v}^0\| ) \leq 3 e^{-\gamma a^2},
$$
provided that $B't \leq \bar{\sigma}^2$,
which is equivalent to the following two situations:

(1) when $\frac{|T| \mu_M}{\sqrt{m}} \leq 2$, then $a \leq \frac{\sqrt{2 m}}{\sqrt{|T|}\mu_M}$.

(2) when $\frac{|T| \mu_M}{\sqrt{m}}>2$, then $a \leq ({\frac{m}{\mu_M^2}})^{1/4}$.

As seen (\ref{epc}) in Lemma \ref{Lemma8},
\begin{eqnarray*}
\frac{m}{N} \| \sum_{k=1}^N \lambda_k^0 [u^k \quad v^k]\|
&\leq&  m \mu (\Phi, \Psi) \max\{\sqrt{|T_1|},\sqrt{|T_2|}\}\\
&\leq&  m \mu (\Phi, \Psi) \sqrt{|T_1|+|T_2|}
=m \mu (\Phi, \Psi) \sqrt{|T|},
\end{eqnarray*}
then
$$
P(\|v^0\| >a\bar{\sigma}+ m \mu (\Phi, \Psi) \sqrt{|T|} + \sqrt{m |T|} \mu_M) \leq 3 e^{-\gamma a^2}.
$$
\end{proof}

\begin{lemma}\label{Lemma10}
Let $w^0=([\Phi_{\Omega T_1} \quad \Psi_{\Omega T_2}]^* [\Phi_{\Omega T_1} \quad \Psi_{\Omega T_2}])^{-1} v^0$.
With the same notations and hypotheses as in Lemma \ref{Lemma9}, then
\begin{eqnarray*}
&&P(\sup_{t_0 \in T^c} \|\omega^0\| \geq (\frac{1}{2}+\|\Phi_{T_1}^*\Psi_{T_2} /N\|)^{-1} (\mu (\Phi, \Psi) \sqrt{|T|}+ \frac{a\bar{\sigma}}{m}+ +\frac{ \sqrt{|T|} \mu_M}{\sqrt{m}})\\
&\leq& 3(n_1+n_2) e^{-\gamma a^2}
+P(\|[\Phi_{\Omega T_1} \quad \Psi_{\Omega T_2}]^* [\Phi_{\Omega T_1} \quad \Psi_{\Omega T_2}]\|
\leq (\frac{1}{2}+\|F-I\|) m )
\end{eqnarray*}
\end{lemma}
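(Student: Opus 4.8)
The plan is to bound $\|w^0\|$ by the product of the norm of the inverse Gram matrix and $\|v^0\|$, and then to control the two factors separately using Theorem \ref{Th7} and Lemma \ref{Lemma9}. Write $G := [\Phi_{\Omega T_1} \quad \Psi_{\Omega T_2}]^* [\Phi_{\Omega T_1} \quad \Psi_{\Omega T_2}]$, so that $w^0 = G^{-1} v^0$. Whenever $G$ is invertible, submultiplicativity of the operator norm gives
$$
\|w^0\| \leq \|G^{-1}\| \cdot \|v^0\| = \frac{\|v^0\|}{\lambda_{\min}(G)},
$$
where $\lambda_{\min}(G)$ is the smallest eigenvalue of the Hermitian positive matrix $G$. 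Thus the stated threshold will emerge once we have a lower bound on $\lambda_{\min}(G)$ and an upper bound on $\|v^0\|$.

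For the denominator I would invoke the eigenvalue estimate recorded just after Theorem \ref{Th7}: on the event $E$ where $\|\frac{1}{m}G - I\| \leq \frac{1}{2} + \|\Phi_{T_1}^*\Psi_{T_2}/N\|$ one has $\theta^* G \theta \geq (\frac{1}{2}+\|\Phi_{T_1}^*\Psi_{T_2}/N\|)m\|\theta\|^2$ for all $\theta$, hence $\lambda_{\min}(G) \geq (\frac{1}{2}+\|\Phi_{T_1}^*\Psi_{T_2}/N\|)m$. Its complement $E^c$ is precisely the event $\{\|G\| \leq (\frac{1}{2}+\|F-I\|)m\}$ appearing in the conclusion, after using the identity $\|F-I\| = \|\Phi_{T_1}^*\Psi_{T_2}/N\|$ established at the end of the proof of Theorem \ref{Th7}. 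For the numerator, Lemma \ref{Lemma9} gives, for each fixed $t_0 \in T^c$,
$$
P\big(\|v^0\| > a\bar{\sigma} + m\,\mu(\Phi,\Psi)\sqrt{|T|} + \sqrt{m|T|}\,\mu_M\big) \leq 3 e^{-\gamma a^2}.
$$
On $E$, dividing this threshold by $(\frac{1}{2}+\|F-I\|)m$ and simplifying $\sqrt{m|T|}\,\mu_M/m = \sqrt{|T|}\,\mu_M/\sqrt{m}$ reproduces exactly the quantity $b := (\frac{1}{2}+\|F-I\|)^{-1}\big(\mu(\Phi,\Psi)\sqrt{|T|} + \frac{a\bar{\sigma}}{m} + \frac{\sqrt{|T|}\,\mu_M}{\sqrt{m}}\big)$ in the statement.

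The final step is a union bound, and the essential point is that $E$ is a single event independent of $t_0$, whereas $v^0$ depends on $t_0$. On $E$ the inequality $\|w^0\| \geq b$ forces $\|v^0\| \geq b\,\lambda_{\min}(G) \geq a\bar{\sigma} + m\,\mu(\Phi,\Psi)\sqrt{|T|} + \sqrt{m|T|}\,\mu_M$, so that
$$
\Big\{\sup_{t_0 \in T^c} \|w^0\| \geq b\Big\} \subseteq E^c \cup \bigcup_{t_0 \in T^c} \big\{\|v^0\| > a\bar{\sigma} + m\,\mu(\Phi,\Psi)\sqrt{|T|} + \sqrt{m|T|}\,\mu_M\big\}.
$$
Applying Lemma \ref{Lemma9} to each of the at most $|T^c| \leq n_1 + n_2$ events in the union yields the term $3(n_1+n_2)e^{-\gamma a^2}$, while $E^c$ contributes $P(\|G\| \leq (\frac{1}{2}+\|F-I\|)m)$, giving the claimed bound.

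The main obstacle is the bookkeeping in this last step: the eigenvalue event $E$ must be extracted once rather than folded into the union over $t_0$, and one must verify that on $E$ the deterministic implication $\|w^0\| \geq b \Rightarrow \|v^0\| > (\text{Lemma \ref{Lemma9} threshold})$ holds with exactly the constants for which Lemma \ref{Lemma9} is stated, so that it applies verbatim to each $t_0$. The remaining work — the identity $\|F-I\| = \|\Phi_{T_1}^*\Psi_{T_2}/N\|$ and the rearrangement producing $b$ — is routine.
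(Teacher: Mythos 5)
Your proposal is correct and takes essentially the same route as the paper's own proof: the paper likewise introduces the two events $A=\{\|G\|\geq(\frac{1}{2}+\|\Phi_{T_1}^*\Psi_{T_2}/N\|)m\}$ (whose complement is exactly the $P(\|G\|\leq(\frac{1}{2}+\|F-I\|)m)$ term) and $B=\{\sup_{t_0\in T^c}\|v^0\|\leq a\bar{\sigma}+m\mu(\Phi,\Psi)\sqrt{|T|}+\sqrt{m|T|}\,\mu_M\}$, bounds $P(B^c)\leq 3(n_1+n_2)e^{-\gamma a^2}$ via Lemma \ref{Lemma9}, and concludes with $P((A\cap B)^c)\leq P(A^c)+P(B^c)$. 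Your write-up only differs in making explicit the union bound over $t_0$ and the deterministic implication $\|w^0\|\geq b \Rightarrow \|v^0\|\geq b\,\lambda_{\min}(G)$, which the paper leaves implicit (including its identification of the well-conditioning event with the spectral-norm event, a convention you inherit from the discussion following Theorem \ref{Th7}).
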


\begin{proof}

Let $A$ and $B$ be the events
$
\{\|[\Phi_{\Omega T_1} \quad \Psi_{\Omega T_2}]^* [\Phi_{\Omega T_1} \quad \Psi_{\Omega T_2}]\|
\geq (\frac{1}{2}+\|\Phi_{T_1}^*\Psi_{T_2} /N\|) m \}
$
and
$\{\sup_{t_0 \in T^c} \|v^0\| \leq a\bar{\sigma}+ m \mu (\Phi, \Psi) \sqrt{|T|} + \sqrt{m |T|} \mu_M\}$ respectively.

Lemma \ref{Lemma9} gives $P(B^c) \leq 3(n_1+n_2) e^{-\gamma a^2}$.
Then on the event
$
A\cap B =
\{\sup_{t_0 \in T^c} \|\omega^0\| \leq (\frac{1}{2}+\|\Phi_{T_1}^*\Psi_{T_2} /N\|)^{-1} (\mu (\Phi, \Psi) \sqrt{|T|}+ \frac{a\bar{\sigma}}{m}+ +\frac{ \sqrt{|T|} \mu_M}{\sqrt{m}})
$,
we have
$$
P({(A \cap B)}^c) =  P( A^c \cup B^c) \leq P( A^c )+ P( B^c ),
$$
the claim follows.
\end{proof}

\begin{lemma}\label{Lemma11}
Assume that $\tau(t), t \in T$ is an i.i.d. sequence of symmetric Bernoulli random variables.
For each $\lambda >0$, we have
$$
P(\sup_{t \in T^c} |\pi(t)| \geq 1 ) \leq 2 (n_1+n_2) e^{-\frac{1}{2 \lambda^2}}+ P(\sup_{t_0 \in T^c} \|\omega^0\| \geq \lambda)
$$
\end{lemma}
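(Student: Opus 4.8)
The plan is to use the representation $\pi(t_0)=\langle w^0,\tau\rangle$ recorded just above the statement, in which $\tau$ is the i.i.d. symmetric Bernoulli (sign) sequence on $T$ and $w^0=([\Phi_{\Omega T_1}\quad\Psi_{\Omega T_2}]^*[\Phi_{\Omega T_1}\quad\Psi_{\Omega T_2}])^{-1}v^0$ depends only on the random sampling set $\Omega$. The key observation is that, once $\Omega$ is frozen, $\pi(t_0)$ becomes a weighted sum of independent $\pm1$ variables, $\pi(t_0)=\sum_{t\in T}w^0(t)\,\tau(t)$, whose variance proxy is exactly $\|w^0\|^2$; a concentration inequality for Rademacher sums then applies directly.

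First I would fix $t_0\in T^c$ and condition on $\Omega$. Treating $w^0$ as deterministic and applying Hoeffding's inequality to the Rademacher sum $\langle w^0,\tau\rangle$ yields
$$
P(|\pi(t_0)|\geq 1 \mid \Omega)\leq 2\exp\left(-\frac{1}{2\|w^0\|^2}\right).
$$
This is the crucial step: the tail is governed entirely through $\|w^0\|$, precisely the quantity appearing in the second term of the claim.

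Next I would decompose according to the event $\{\sup_{t_0\in T^c}\|w^0\|\leq\lambda\}$. On this event the Hoeffding bound is uniform in $t_0$, namely $P(|\pi(t_0)|\geq 1\mid\Omega)\leq 2e^{-1/(2\lambda^2)}$, so a union bound over the at most $n_1+n_2$ indices $t_0\in T^c$ produces the term $2(n_1+n_2)e^{-1/(2\lambda^2)}$. On the complementary event $\{\sup_{t_0\in T^c}\|w^0\|>\lambda\}$ I would bound the conditional failure probability trivially by $1$, which contributes $P(\sup_{t_0\in T^c}\|w^0\|\geq\lambda)$. Taking expectation over $\Omega$ and summing the two contributions yields the stated inequality.

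The main obstacle is organizational rather than analytic: one must keep the conditioning on $\Omega$ straight, so that Hoeffding is applied to the signs $\tau$ while $w^0$ is held fixed, and then combine the union bound over $T^c$ with the two-event split without double counting. The concentration estimate itself is standard, so once this bookkeeping is in place the bound follows immediately.
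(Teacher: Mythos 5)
Your proposal is correct and follows essentially the same route as the paper: Hoeffding's inequality applied to the Rademacher sum $\langle w^0,\tau\rangle$ conditionally on the sampling randomness, a union bound over the at most $n_1+n_2$ indices in $T^c$, and the split on the event $\{\sup_{t_0\in T^c}\|w^0\|\leq\lambda\}$ with the trivial bound on its complement (the paper phrases this last step as $P(A)\leq P(A\mid B)+P(B^c)$, which is the same decomposition you obtain by taking expectation over $\Omega$). Your explicit bookkeeping of the conditioning on $\Omega$ is, if anything, slightly cleaner than the paper's conditioning on $\omega^0$, but the argument is identical in substance.
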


\begin{proof}
Recall that $\pi(t_0)=\langle \omega^0, \tau \rangle$,
by using Hoeffding's inequality, we have
$$
P(|\pi(t_0)|>1| \omega^0)
=P(|\langle \omega^0, \tau \rangle|>1| \omega^0)
\leq 2 e^{-\frac{1}{2\|\omega^0\|^2}}.
$$

It then follows that
$$
P(\sup_{t_0 \in T^c}|\pi(t_0)|>1| \sup_{t_0 \in T^c}\|\omega^0\| \leq \lambda )
\leq 2 (n_1+n_2) e^{-\frac{1}{2 \lambda^2}}.
$$

Let $A$ and $B$ be the events
$\{\sup_{t_0 \in T^c}|\pi(t_0)|>1\} $
and
$\{\sup_{t_0 \in T^c}\|\omega^0\| \leq \lambda \}$
respectively.

Then
$$
P(A)=P(A|B)P(B)+P(A|B^c)P(B^c) \leq P(A|B)+P(B^c),
$$
which proves the result.
\end{proof}

\noindent
{\bf Proof of Theorem \ref{Th8}.}
Set $\lambda=(\frac{1}{2}+\|\Phi_{T_1}^*\Psi_{T_2} /N\|)^{-1} (\mu (\Phi, \Psi) \sqrt{|T|}+ \frac{a\bar{\sigma}}{m}+ +\frac{ \sqrt{|T|} \mu_M}{\sqrt{m}})$.
Combining Lemma \ref{Lemma10} and \ref{Lemma11},
we have for $\bar{\sigma}$ and each $a>0$ satisfying the conditions in Lemma  \ref{Lemma9},
\begin{eqnarray*}
P(\sup_{t \in T^c} |\pi(t)| \geq 1 )
&\leq& 2 (n_1+n_2) e^{-\frac{1}{2 \lambda^2}}
+ 3(n_1+n_2) e^{-\gamma a^2}  \\
&+& P\left(\|[\Phi_{\Omega T_1} \quad \Psi_{\Omega T_2}]^* [\Phi_{\Omega T_1} \quad \Psi_{\Omega T_2}]\|
\leq (\frac{1}{2}+\|\Phi_{T_1}^*\Psi_{T_2} /N\|) m \right)
\end{eqnarray*}
\hfill $\Box$

\small


\begin{thebibliography}{99}

\bibitem{1}
Daubechies I.
\emph{Ten lectures on wavelets}. Philadelphia, PA: SIAM, 1992.

\bibitem{2}
Daubechies I, Jaffard S, Journ\'e J L.
\emph{A simple Wilson orthonormal basis with exponential decay.}
Siam Journal on Mathematical Analysis, 1928, 22(2):554-572.

\bibitem{3}
Cand\`es E J, Donoho D L.
\emph{Ridgelets: a key to higher-dimensional intermittency?}
Philosophical Transactions Mathematical Physical \& Engineering Sciences, 1999, 357(1760):2495-2509.

\bibitem{4}
Cand\`es E J, Donoho D L.
\emph{Curvelets, multiresolution representation, and scaling laws.}
Conference on Wavelet Applications in Signal and Image Processing VIII. 2000:1-12.

\bibitem{5}
Donoho D L, Huo X.
\emph{Uncertainty principles and ideal atomic decomposition.}
IEEE Transactions on Information Theory, 2001, 47 (7) :2845-2862.

\bibitem{6}
Donoho D L, Stark P B.
\emph{Uncertainty principles and signal recovery.}
Siam Journal on Applied Mathematics, 1989, 49(3):906-931.

\bibitem{7}
Goh S S, Goodman T N T.
\emph{Uncertainty principles in Banach spaces and signal recovery.}
Journal of Approximation Theory, 2006, 143(1):26-35.

\bibitem{Bib:Donoho}
Donoho D L.
\emph{Compressed sensing.}
IEEE Transactions on Information Theory, 2006, 52(4): 1289-1306.

\bibitem{Bib:Candes}
Cand\`es E J, Tao T.
\emph{Near-optimal signal recovery from random projections: universal encoding strategies.}
IEEE Transactions on Information Theory, 2006, 52(12): 5406-5425.

\bibitem{Bib:Candes}
Cand\`es E J, Romberg J , Tao T.
\emph{Stable signal recovery from incomplete and inaccurate measurements.}
Communications on Pure and Applied Mathematics, 2006, 59(8): 410-412.

\bibitem{8}
Cand\`es E J, Romberg J, Tao T.
\emph{Robust uncertainty principles: exact signal reconstruction from highly incomplete frequency information.}
IEEE Transactions on Information Theory, 2004, 52(2): 489-509.

\bibitem{8}
Elad M, Bruckstein A M.
\emph{A generalized uncertainty principle and sparse representation in pairs of bases.}
IEEE Transactions on Information Theory, 2002, 48(9): 2558-2567.

\bibitem{8}
Dragotti P L, Lu Y M.
\emph{On sparse representation in Fourier and local bases.}
IEEE Transactions on Information Theory, 2014, 60(12): 7888-7899.

\bibitem{9}
Donoho D L, Elad M.
\emph{Optimally sparse representation in general (nonorthogonal) dictionaries via $\ell^1$ minimization.}
Proceedings of the National Academy of Sciences, 2003, 100(5): 2197-2202.

\bibitem{Bib:}
Gribonval R, Nielsen M.
\emph{Sparse representations in unions of bases.}
IEEE Transactions on Information Theory, 2004, 49(12):3320-3325.

\bibitem{Bib:}
Malioutov D M, Cetin M, Willsky A S.
\emph{Optimal sparse representations in general overcomplete bases.}
IEEE International Conference on Acoustics, Speech and Signal Processing, 2004, 2(2):ii-793-6.

\bibitem{Bib:Ninness}
Ninness B, Gustafsson F.
\emph{A unifying construction of orthonormal bases for system identification.}
Proceedings of the IEEE Conference on Decision and Control, 1994: 515-521.

\bibitem{Bib:Akcay}
Akcay H, Ninness B.
\emph{Rational basis functions for robust identification from frequency and time-domain measurements.}
American Control Conference, 1998: 1101-1117.

\bibitem{Bib:Ninness}
Ninness B.
\emph{Frequency domain estimation using orthonormal bases.}
Proceedings of IFAC World Congress, 1996: 381-386.

\bibitem{Bib:Ward}
Ward N F D, Partinton J R.
\emph{Robust identification in the disc algebra using rational wavelets and orthonormal basis functions.}
International Journal of Control, 1996, 64(3): 409-423.

\bibitem{Bib:Van gucht}
Gucht P V, Bultheel A.
\emph{Orthogonal rational functions for system identification: numerical aspects.}
IEEE Transactions on Automatic Control, 2003, 48(4): 705-709.

\bibitem{Bib:Ninness}
Ninness B, G\'omez J C, Weller S.
\emph{MIMO system identification using orthonormal basis functions.}
IEEE Conference on Decision \& Control, 2000: 703--708.

\bibitem{Bib:18}
Mi W, Qian T.
\emph{Frequency-domain identification: An algorithm based on an adaptive rational orthogonal system.}
Automatica, 2012, 48(6):1154-1162.

\bibitem{Bib:19}
Chen Q, Mai W, Zhang L, et al.
\emph{System identification by discrete rational atoms.}
Automatica, 2015, 56:53-59.

\bibitem{Bib:20}
Ohta Y.
\emph{Stochastic system transformation using generalized orthonormal basis functions with applications to continuous-time system identification.}
Automatica, 2011, 47(5):1001-1006.

\bibitem{Bib:21}
Hidayat E, Medvedev A.
\emph{Laguerre domain identification of continuous linear time-delay systems from impulse response data.}
Automatica, 2012, 48(11):2902-2907.

\bibitem{Bib:22}
Tiels K, Schoukens J.
\emph{Wiener system identification with generalized orthonormal basis functions.}
Automatica, 2014, 50(12):3147-3154.

\bibitem{Bib:23}
Heuberger P S C.
\emph{Modelling and identification with rational orthogonal basis functions.}
Springer London, 2005.

\bibitem{Bib:24}
Hof P M J V D, Heuberger P S C, and Bokor J.
\emph{System identification with generalized orthonormal basis functions.}
Automatica, 1995, 31(12): 1821-1834.

\bibitem{10}
Wahlberg B.
\emph{System identification using Kautz models.}
IEEE Transactions on Automatic Control, 1994, 39(6): 1276-1282.

\bibitem{11}
Wahlberg B, M\"akil\"a P M.
\emph{On approximation of stable linear dynamical systems using Laguerre and Kautz functions.}
Automatica, 1996, 32(5): 693-708.

\bibitem{32}
Kautz W H.
\emph{Transient synthesis in the time domain.}
Transactions of the IRE Professional Group on Circuit Theory, 1954, ct-1:29-39.

\bibitem{33}
Young T Y, Huggins W H.
\emph{`Complementary' signals and orthogonalized exponentials.}
IRE Transactions on Circuit Theory, 1962, 9(4):362-370.

\bibitem{34}
Walsh J L.
\emph{Interpolation and approximation by rational functions in the complex domain.}
American Mathematical Society, 1956.

\bibitem{Bib:Gu}
Gu Y, Jin J, Mei S.
\emph{$l_0$ norm constraint LMS algorithm for sparse system identification.}
IEEE Signal Processing Letters, 2013, 16(9): 774-777.

\bibitem{Bib:Chen}
Chen Y, Gu Y, Hero A O.
\emph{Sparse LMS for system identification.}
IEEE International Conference on Acoustics, Speech and Signal Processing, 2009: 3125-3128.

\bibitem{Bib:}
Kalouptsidis N, Mileounis G, Babadi B, et al.
\emph{Adaptive algorithms for sparse system identification.}
Signal Processing, 2011, 91(8): 1910-1919.

\bibitem{Bib:Kopsinis}
Kopsinis Y, Slavakis K, Theodoridis S.
\emph{Online sparse system identification and signal reconstruction using projections onto weighted $l_1$ balls.}
IEEE Transactions on Signal Processing, 2010, 59(3): 936-952.

\bibitem{Bib:Slavakis}
Slavakis K, Kopsinis Y, Theodoridis S.
\emph{Adaptive algorithm for sparse system identification using projections onto weighted $l_1$ balls.}
IEEE International Conference on Acoustics, Speech and Signal Processing, 2010: 3742-3745.

\bibitem{31}
Rauhut H.
\emph{Compressive sensing and structured random matrices.}
Radon Series on Computational and Applied Mathematics, 2010, 9: 1-92.

\bibitem{34}
Natarajan B K.
\emph{Sparse approximate solutions to linear systems.}
Siam Journal on Computing, 1995, 24(2): 227-234.

\bibitem{35}
Chen S, Donoho D L, Saunders M.
\emph{Atomic decomposition by basis pursuit.}
Siam Review, 2001, 43(1): 33-61.

\bibitem{36}
Boyd S, Vandenberghe L.
\emph{Convex optimization.}
Cambridge University Press, Cambridge, 2004.

\bibitem{37}
Cand\`es E, Romberg J.
\emph{Sparsity and incoherence in compressive sampling.}
Inverse Problems, 2006, 23(3): 969-985(17).

\bibitem{38}
Daubechies I, Defrise M, De Mol C.
\emph{An iterative thresholding algorithm for linear inverse problems with a sparsity constraint.}
Communications on Pure and Applied Mathematics, 2004, 57(11): 1413¨C1457.

\bibitem{39}
Daubechies I, Devore R, Fornasier M, et al.
\emph{Iteratively reweighted least squares minimization for sparse recovery.}
Communications on Pure and Applied Mathematics, 2010, 63(1): 1-38.

\bibitem{47}
Daubechies I, Fornasier M, Loris L.
\emph{Accelerated projected gradient method for linear inverse problems with sparsity constraints.}
Journal of Fourier Analysis and Applications, 2008, 14(5):764-792.

\bibitem{48}
Donoho D L, Tsaig Y.
\emph{Fast solution of $l_1$-norm minimization problems when the solution may be sparse.}
IEEE Transactions on Information Theory, 2008, 54(11): 4789-4812.

\bibitem{49}
Kim S, Koh K, Lustig M, Boyd S and Gorinevsky D.
\emph{An interior-point method for large-scale $\ell_1$-regularized least squares.}
IEEE Journal of Selected topics in Signal processing, 2007, 1(4): 606-617.

\bibitem{50}
Melrose M R.
\emph{Random vectors in the isotropic position.}
Journal of Functional Analysis, 1999, 164(1):60-72.

\end{thebibliography}
\end{document}